\newtheorem{theorem}{Theorem}[section]
\newtheorem{lemma}[theorem]{Lemma}
\newtheorem{corollary}[theorem]{Corollary}
\newtheorem{proposition}[theorem]{Proposition}
\newtheorem{conjecture}[theorem]{Conjecture}
\newtheorem{claim}[theorem]{Claim}
\theoremstyle{definition}
\newtheorem{definition}[theorem]{Definition}
\theoremstyle{remark}
\newtheorem{remark}[theorem]{Remark}
\newenvironment{claimproof}{\begin{proof}}{\end{proof}}
\newcommand{\CS}{{\mathcal S}}
\newcommand{\NN}{{\mathbb N}}
\newcommand{\IS}{\mathrm{IS}}
\newcommand{\numindsub}[2]{\#\mathrm{IndSub}(#1 \to #2)}
\newcommand{\numindsubstar}[1]{\#\mathrm{IndSub}(#1 \to \,\star\,)}
\newcommand{\numsubstar}[1]{\#\mathrm{Sub}(#1 \to \,\star\,)}
\newcommand{\pindsub}[1]{\#\textnormal{\textsc{IndSub}}(#1)}
\newcommand{\slice}[2]{#1^{(#2)}}
\newcommand{\scorp}[1]{\Psi_{#1}}
\newcommand{\altEnum}[2]{\widehat{#1}(#2)}
\newcommand{\sinkprop}{\Psi_{\mathrm{sink}}}
\newcommand{\fp}[2]{\mathrm{FP}(#1, #2)}
\DeclareMathOperator{\set}{set}
\DeclareMathOperator{\Aut}{Aut}
\newcommand{\sharpwone}{\textnormal{\textsf{\#W[1]}}}
\newcommand{\sharpP}{\textnormal{\textsf{\#P}}}
\newcommand{\ETH}{\textnormal{\textsf{ETH}}}
\newcommand{\SETH}{\textnormal{\textsf{SETH}}}
\newcommand{\orcid}[1]{\href{https://orcid.org/#1}{\includegraphics[height=1.8ex]{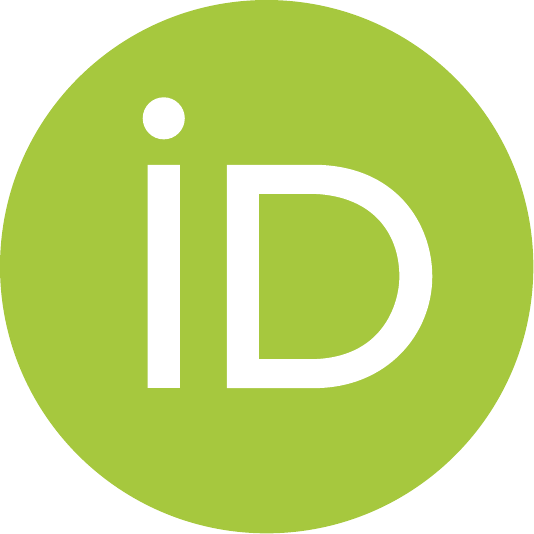}}}
\def\blfootnote{\gdef\@thefnmark{}\@footnotetext}
\title{Counting Small Induced Subgraphs: \\Scorpions Are Easy but Not Trivial%
\blfootnote{\rightskip=5.7cm
The research is funded by the European Union (ERC, CountHom, 101077083).
Views and opinions expressed are those of the author(s) only and do not necessarily reflect those of the European Union or the European Research Council Executive Agency. Neither the European Union nor the granting authority can be held responsible for them.
}
}
\author{
Radu Curticapean \orcid{0000-0001-7201-9905} \\
University of Regensburg and IT University of Copenhagen
\and
Simon D\"{o}ring \orcid{0009-0002-6667-5257} \\
Max Planck Institute for Informatics and Saarland University
\and
Daniel Neuen \orcid{0000-0002-4940-0318}\\
Max Planck Institute for Informatics
}
\date{}
\definecolor[named]{urlblue}{cmyk}{1,0.58,0,0.21}
\begin{document}

\maketitle
\begin{textblock}{5}(7.85, 8.05) \includegraphics[width=150px]{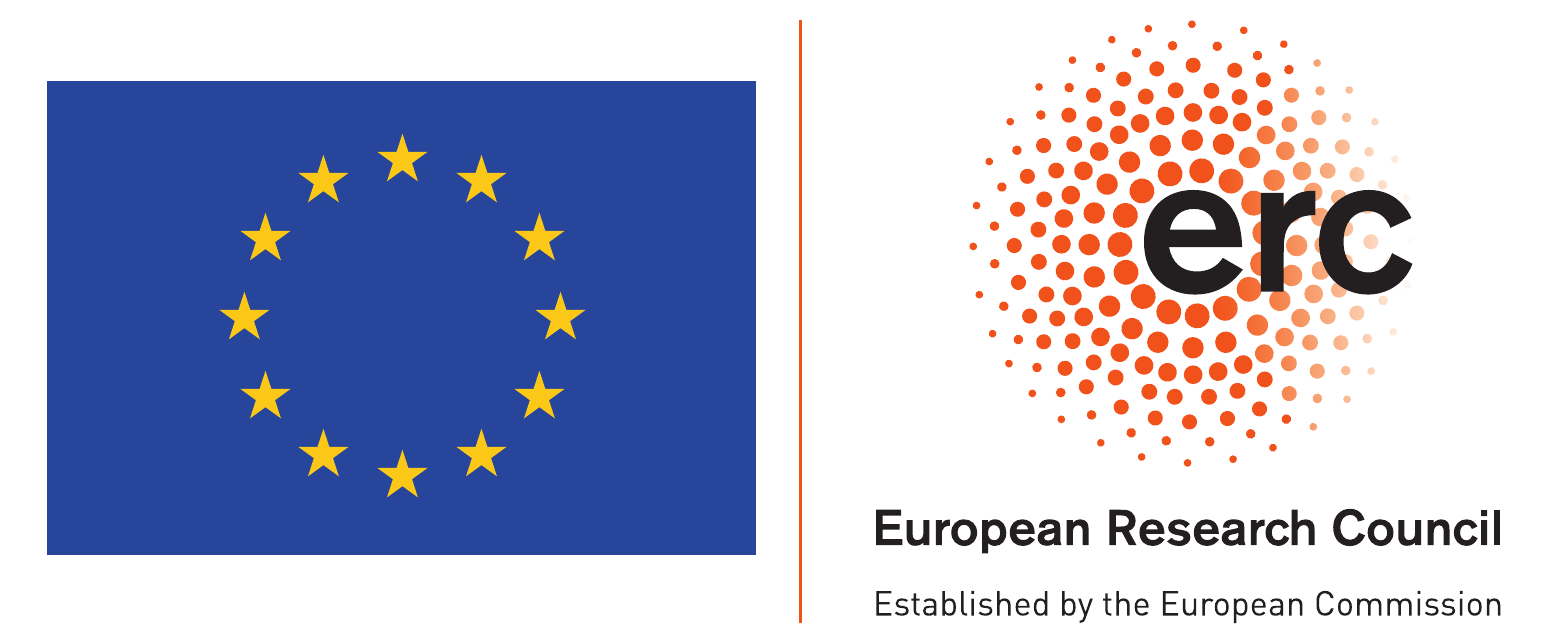} \end{textblock}

\begin{abstract}
    We consider the parameterized problem $\pindsub{\Phi}$ for fixed graph properties $\Phi$: Given a graph $G$ and an integer $k$, this problem asks to count the number of induced $k$-vertex subgraphs satisfying $\Phi$.
    D{\"{o}}rfler et al.\ [Algorithmica 2022] and Roth et al.\ [SICOMP 2024]  conjectured that $\pindsub{\Phi}$ is $\sharpwone$-hard for all non-meager properties $\Phi$, i.e., properties that are nontrivial for infinitely many $k$.
    This conjecture has been confirmed for several restricted types of properties, including all hereditary properties~[STOC 2022] and all edge-monotone properties~[STOC 2024].

    In this work, we refute this conjecture by showing that scorpion graphs, certain $k$-vertex graphs which were introduced more than 50 years ago in the context of the evasiveness conjecture, can be counted in time $O(n^4)$ for all $k$.
    A simple variant of this construction results in graph properties that achieve arbitrary intermediate complexity assuming $\ETH$.

    We formulate an updated conjecture on the complexity of $\pindsub{\Phi}$ that correctly captures the complexity status of scorpions and related constructions.
\end{abstract}

\clearpage

\section{Introduction}

Counting small patterns in graphs is a fundamental problem in computer science, with applications in bioinformatics \cite{DBLP:journals/tcsb/SchreiberS05}, network analysis \cite{Milo02, SchillerJHS15}, databases \cite{DBLP:conf/stoc/GroheSS01}, and other areas.
In this paper, we focus on generalizations of counting induced $H$-copies in a large $n$-vertex input graph $G$, for a fixed $k$-vertex graph $H$.

\paragraph{Counting Induced $H$-Copies.}
The starting point of our investigation is the problem of counting induced $H$-copies for a fixed \emph{individual} graph $H$.
Formally, this problem asks for the number of sets $X \subseteq V(G)$ such that the induced subgraph $G[X]$ is isomorphic to $H$.
We stress that $H$ is considered fixed in this problem, and only $G$ is the input.
In particular, each such problem can be solved in time $O(n^k)$, which is polynomial in $n$ for fixed $k$. 

Some improvements over the trivial $O(n^k)$ running time are known:
For example, triangles can be counted in $O(n^\omega)$ time \cite{DBLP:journals/siamcomp/ItaiR78}, where $\omega < 2.372$ is the optimal exponent of $n \times n$ matrix multiplication \cite{DBLP:conf/soda/AlmanDWXXZ25}. 
This can be lifted to cliques beyond $K_3$: Denoting by $C(K_k)$ the optimal exponent for counting $k$-cliques in $n$-vertex graphs, similar to the exponent of matrix multiplication, we have $C(K_k) \leq \omega \cdot \lceil k/3 \rceil$ (see \cite{NesetrilP85}).
Under the Exponential-Time Hypothesis $\ETH$, there exists a fixed constant $\alpha$ such that $C(K_k)\geq \alpha \cdot k$ (see \cite{ChenHKX06}).

Writing $C_\mathrm{ind}(H)$ to denote the optimal exponent of counting induced $H$-subgraphs~\cite{CurticapeanDM17}
for fixed $H$, a straightforward reduction shows that $C_\mathrm{ind}(H) = C(K_k)$ for all $k$-vertex graphs $H$.
In other words, for fixed $k$, all induced $H$-counting problems with $k$-vertex $H$ are equally hard, and they require an exponent of $\Omega(k)$ under $\ETH$.
Compare this to counting not necessarily induced subgraphs, where different patterns can yield different complexity exponents: The number of subgraph copies of the edgeless $k$-vertex graph $\IS_k$ is always~${n \choose k}$ and can thus be computed in linear time, while counting $k$-cliques is hard.

\paragraph{Counting Patterns From a Set $\mathcal H$.}
In recent years, counting occurrences of individual $k$-vertex patterns $H$ has been generalized to counting pattern occurrences from a fixed \emph{set} of patterns~\cite{JerrumM15,JerrumM15b}.
In this setting, we fix a number $k \in \mathbb N$ and a set $\mathcal H$ of $k$-vertex graphs.
On input $G$, we wish to count the induced $k$-vertex subgraphs of $G$ isomorphic to some $H\in \mathcal H$.
This subsumes the problem of counting induced $H$-copies, but also allows us to address, e.g., the problem of counting connected $k$-vertex graphs \cite{JerrumM15}.

Of course, every such problem can be solved by counting induced $H$-copies for the individual graphs $H\in \mathcal H$, which readily implies an $O(n^{C(K_k)})$ time algorithm for counting induced copies from $\mathcal H$.
Significantly faster algorithms were only known for \emph{trivial} pattern sets $\mathcal H$, i.e., if the pattern set $\mathcal H$ is empty or contains all $k$-vertex graphs. In these cases, the output is just $0$ or ${n \choose k}$, respectively.
More specifically, no nontrivial set $\mathcal H$ of $k$-vertex graphs with exponent strictly less than $C(K_{\lfloor k/2\rfloor})$ was known (see, e.g., \cite{CurticapeanN25}).
In other words, counting induced patterns from a fixed set $\mathcal H$ of $k$-vertex graphs appeared to be either trivial or very hard.

\paragraph{Parameterized Complexity.}

In the literature, pattern counting problems are often phrased in terms of graph properties $\Phi$ that may hold on infinitely many graphs rather than finite sets $\mathcal H$: In the problem $\pindsub{\Phi}$ for a fixed graph property $\Phi$, the input is a graph $G$ and $k \in \mathbb N$, and we ask to count the induced $k$-vertex subgraphs of $G$ satisfying $\Phi$. Compared to the previous setting, the pattern size $k$ is now part of the input.
We say that $\pindsub{\Phi}$ is \emph{fixed-parameter tractable} if it can be solved in time $f(k) \cdot n^{O(1)}$ for a computable function $f$, and we would like to understand which properties $\Phi$ render $\pindsub{\Phi}$ fixed-parameter tractable.

In prior literature, all known properties $\Phi$ with fixed-parameter tractable $\pindsub{\Phi}$ are essentially trivial:
Formally, we call $\Phi$ \emph{meager} if the restriction $\Phi^{(k)}$ of $\Phi$ to $k$-vertex graphs (i.e., the $k$-th slice of $\Phi$) is trivial for all but finitely many $k\in \mathbb N$.
Meager computable properties $\Phi$ trivially render $\pindsub{\Phi}$ fixed-parameter tractable.
Complementing this, the problem $\pindsub{\Phi}$ was conjectured to be $\sharpwone$-hard for all other computable properties $\Phi$. Here, $\sharpwone$ is the parameterized analogue of $\sharpP$; it is known that $\ETH$ rules out fixed-parameter tractable algorithms for all $\sharpwone$-hard problems~(see, e.g., \cite{FlumG06}).

\begin{conjecture}[\cite{DorflerRSW22,FockeR24,RothSW24}]
    \label{conj:non-meager-hard}
    For every property $\Phi$ that is computable and not meager, the problem $\pindsub{\Phi}$ is $\sharpwone$-hard.
\end{conjecture}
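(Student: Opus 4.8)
The abstract already announces that this conjecture is \emph{false}, so the task is not to prove it but to refute it: the plan is to exhibit a computable, non-meager property $\Phi$ for which $\pindsub{\Phi}$ is not merely fixed-parameter tractable but solvable in polynomial time $O(n^4)$ for all $k$, which contradicts the conjectured $\sharpwone$-hardness (any $\sharpwone$-hard problem being fixed-parameter intractable under $\ETH$). The candidate $\Phi$ is the property ``$G$ is a scorpion graph'': $G$ has a vertex $s$ (the sting) whose only neighbour is a vertex $t$ (the tail), whose only neighbours are $s$ and a vertex $b$ (the body), and $b$ is adjacent to every vertex except $s$; the remaining $k-3$ ``feet'' may be joined arbitrarily among themselves. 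Scorpions exist for every sufficiently large vertex count, and the slice $\Phi^{(k)}$ is then nontrivial, so $\Phi$ is not meager, and the content is to show the counting problem is easy.

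The standard route to analysing $\pindsub{\Phi}$ rewrites the number of induced $k$-vertex subgraphs of $G$ satisfying $\Phi$ as a finite linear combination $\sum_{H} \altEnum{\Phi}{H}\cdot\numhom{H}{G}$ of homomorphism counts, the coefficients $\altEnum{\Phi}{H}$ forming the \emph{alternating enumerator} of $\Phi$; by complexity monotonicity~\cite{CurticapeanDM17}, evaluating this combination is exactly as hard as computing $\numhom{H}{G}$ for the $H$ of largest treewidth in the support, and the latter is $\sharpwone$-hard precisely when that treewidth is unbounded over $k$. Every known confirmation of the conjecture (for hereditary and for edge-monotone $\Phi$) proceeds by \emph{locating} a high-treewidth pattern --- a large clique or biclique --- with $\altEnum{\Phi}{H}\neq 0$. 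To refute it, one does the opposite: show that for the scorpion property $\altEnum{\Phi}{H}=0$ for every $H$ whose treewidth exceeds an absolute constant, so the combination collapses onto finitely many bounded-treewidth patterns and is polynomial-time computable.

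The heuristic reason the cancellation should occur is that a $k$-vertex scorpion is a constant-size gadget $\{s,t,b\}$ together with an \emph{arbitrary} graph on the feet, all of which attach to $b$ in the same way; the feet impose no constraint beyond their number. Concretely, once a valid core triple $(s,t,b)$ --- meaning $st,tb\in E(G)$ and $sb\notin E(G)$ --- is fixed, a set $X=\{s,t,b\}\cup F$ with $|F|=k-3$ induces a scorpion with this core exactly when $F$ is a $(k-3)$-subset of $A(s,t,b):=N(b)\setminus(N(s)\cup N(t)\cup\{s,t,b\})$, a set computable in $O(n)$ time. Hence the count is $\sum_{(s,t,b)}\binom{|A(s,t,b)|}{k-3}$, corrected for the $X$ that are scorpions through more than one core, an $O(n^4)$ computation overall; and since $\binom{|A(s,t,b)|}{k-3}$ is, in homomorphism terms, a bounded-gadget object (it is governed by graphs in which $b$ dominates $k-3$ independent feet, hence of treewidth at most $2$), the support of $\altEnum{\Phi}{\cdot}$ indeed stays low. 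I expect the main obstacle to be exactly this bookkeeping: fixing a definition of ``scorpion'' that keeps the multiplicities tractable, proving that a graph which is a scorpion through two distinct cores is so in a tightly controlled (and, for large $k$, impossible) way, and certifying that every surviving pattern genuinely has treewidth $O(1)$, so that both correctness and the $O(n^4)$ bound follow. A final, easier step perturbs the construction --- forcing the feet to contain, say, a clique of slowly growing prescribed size --- to obtain non-meager properties of arbitrary intermediate complexity under $\ETH$; together with the scorpion example, this shows Conjecture~\ref{conj:non-meager-hard} must be replaced by a refined statement.
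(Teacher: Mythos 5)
Your proposal is correct and takes essentially the same route as the paper: refute the conjecture with the (non-meager) scorpion property, counted by summing, over candidate sting--tail--body triples $(s,t,b)$, the quantity $\binom{|A(s,t,b)|}{k-3}$, which is exactly the algorithm of Theorem~\ref{thm:alg-scorpion}. The one step you leave open---that for scorpions with at least two legs (so $k\geq 5$) the core triple is unique, hence your anticipated ``correction for multiple cores'' is unnecessary---is precisely the paper's Lemma~\ref{lem:scorpion:unique}, settled by a short degree-and-distance argument, while your homomorphism-basis cancellation discussion is inessential to the refutation (the paper's basis analysis appears only later, in Lemma~\ref{lem:alt-enum-scorp}, and the paper obtains intermediate complexities by lengthening the tail rather than constraining the legs).
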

This conjecture has been verified for wide ranges of properties $\Phi$, e.g., properties that are closed under deleting edges \cite{CurticapeanN25, DoringMW24, DoringMW25} or deleting vertices \cite{FockeR24}, and various other natural classes of properties \cite{CurticapeanN25,JerrumM15, JerrumM15b, RothS20, RothSW24}.
This made it a plausible working hypothesis in the area.

\paragraph{Our Results.}
We show that \textbf{Conjecture~\ref{conj:non-meager-hard} fails}. That is, we exhibit non-meager properties $\Phi$ such that $\pindsub{\Phi}$ is fixed-parameter tractable---in fact, it is even polynomial-time solvable.
These counterexamples are derived from simple constructions that were introduced 50 years ago in the context of the \emph{evasiveness conjecture}~\cite{LenstraBE74, DBLP:journals/sigact/Rosenberg73,Kozlov08}, a yet unresolved major open problem about the worst-case query complexity of graph properties.

\begin{itemize}
    \item As a simple counterexample, the property $\sinkprop$ of having a sink vertex in a directed graph is nontrivial on graphs of fixed size $k\geq 2$, and therefore not meager. Nevertheless, $\pindsub{\sinkprop}$ can be solved in linear time on directed graphs without antiparallel edges. 
    We invite the reader to discover the algorithm themselves before proceeding to Section~\ref{sec:sink}. Sinks also presented the first counterexample to a (too strong) version of the evasiveness conjecture on directed graphs~\cite{DBLP:journals/sigact/Rosenberg73}.
    \item A marginally more involved construction also works for undirected graphs: The \emph{scorpion property} $\Psi$ is a non-meager property of undirected graphs such that $\pindsub{\Psi}$ can be solved in $O(n^4)$ time on general undirected graphs. Scorpions were the first counterexample to a (too strong) variant of the evasiveness conjecture for undirected graphs \cite{LenstraBE74}, prompting the restriction of this conjecture to monotone properties.
    \item We show more generally that $\pindsub{\Psi}$ can be made gradually harder: 
    For every $\ell \in \mathbb N$, we construct a \emph{generalized scorpion property} $\scorp{\ell}$ such that $\pindsub{v}$ can be solved in $O(n^{\ell+3})$ time,
    while $\ETH$ rules out $O(n^{\alpha \cdot \ell})$ time algorithms for a fixed constant $\alpha >0$.
    Our construction also allows for $\ell$ to be a function in $k$.
    Under the Strong Exponential-Time Hypothesis $\SETH$, we rule out $O(n^{\ell + 2 - \varepsilon})$ time algorithms.
\end{itemize}
Finally, informed by these counterexamples, we formulate a new hypothesis on the computational complexity of $\pindsub{\Phi}$ for properties $\Phi$.
This new hypothesis is more technical and explained in Section~\ref{sec:conj}.
In a nutshell, it is based around the well-established fact that sums of induced pattern counts like $\numindsub{\Phi^{(k)}}{G}$ can be expressed as linear combinations of (not necessarily induced) subgraph counts \cite{CurticapeanDM17}, and that such basis changes may help in understanding the complexity of a problem~\cite{CurticapeanDM17,CurticapeanDN025,CurticapeanN25,DorflerRSW22,DoringMW24,DoringMW25,FockeR24,GoldbergR24,DBLP:conf/icml/JinBCL24,RothS20,RothSW24,DBLP:conf/icalp/Roth0W21}.
More specifically, the new hypothesis postulates that a useful phenomenon occurs when expressing the graph parameter $\numindsub{\Phi^{(k)}}{G}$ as a linear combination of subgraph counts: Any hard term in such a linear combination ensures hardness of the entire linear combination. As outlined in Section~\ref{sec:conj},  general linear combinations of $k$-vertex subgraph counts do not enjoy this useful phenomenon.

\begin{figure}[t]
    \centering
    \includegraphics[width=0.65\linewidth]{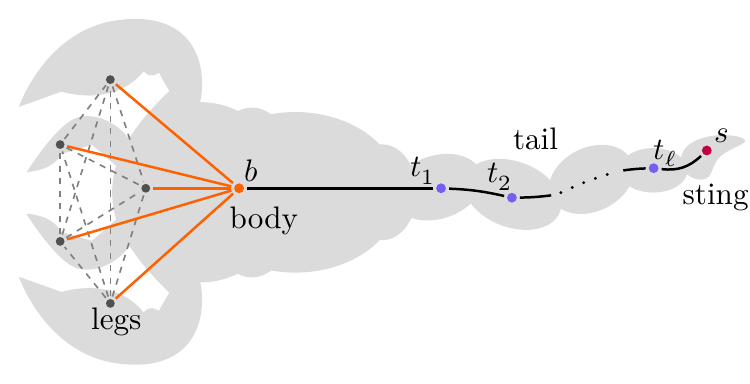}
    \caption{
    A graph $H$ is an $\ell$-scorpion if it has the above form:
    Dashed edges \emph{may} be present in $H$ or not, solid edges \emph{must} be present, and non-drawn edges \emph{must not} be present.
    }
    \label{fig:scorpion}
\end{figure}

\section{Preliminaries}

We write $\NN = \{1,2,3,\dots\}$ and $[n] \coloneqq \{1, ..., n\}$ for $n \in \NN$. 
For a set $A$, we write $\binom{A}{k}$ for the set of all $k$-element subsets of $A$.

\paragraph{Graph Theory.}
We follow standard textbooks~\cite{Diestel17} for graph-theoretic notation.
Unless stated otherwise, graphs are simple (i.e., without multiedges or self-loops) and undirected.
We write $V(G)$ and $E(G)$ for the vertex and edge set of $G$, respectively, and we write $N_G(v) \coloneqq \{u \mid uv \in E(G)\}$ for the neighborhood of $v \in V(G)$ in $G$.

A graph $H$ is a \emph{subgraph} of $G$, written $H \subseteq G$, if it can be obtained by deleting vertices and edges from $G$.
For $X \subseteq V(G)$, we write $G[X]$ to denote the \emph{induced subgraph} on $X$.
For $S \subseteq E(G)$, we write $G[S]$ for the subgraph with vertex set $V(G)$ and edge set $S$. 
For $k,\ell \in \NN$, we write $K_\ell$ for the complete $\ell$-vertex graph, $\IS_\ell$ for the edgeless $\ell$-vertex graph, and $K_{\ell,k}$ for the complete bipartite graph on $\ell+k$ vertices.

\paragraph{Induced Subgraph Counts.}
A \emph{graph property} $\Phi$ is a function that maps each graph $G$ to $\{0, 1\}$ and is invariant under isomorphisms, i.e., $\Phi(G) = \Phi(H)$ for all isomorphic graphs $G,H$.
For $k \in \NN$, the \emph{$k$-th slice of $\Phi$}, denoted by $\slice{\Phi}{k}$, is the restriction of $\Phi$ to $k$-vertex graphs.
We implicitly identify $\Phi^{(k)}$ with the set of all $k$-vertex graphs $G$ satisfying $\Phi(G) = 1$, and we use graph properties and sets of graphs interchangeably.
For $k \in \NN$ and a graph $G$, the number of $k$-vertex induced subgraphs $G[X]$ satisfying $\Phi$ will be denoted by
\[\numindsub{\Phi^{(k)}}{G} \coloneqq \sum_{\substack{X \subseteq V(G)\\|X| = k}} \Phi(G[X]).\]
Moreover, we write $\numindsubstar{\Phi^{(k)}}$ for the map $G \mapsto \numindsub{\Phi^{(k)}}{G}$.

\paragraph{Complexity Theory.}
A \emph{parameterized problem} consists of a function $P \colon \Sigma^\ast \to \NN$ and a computable parameterization $\kappa \colon \Sigma^\ast \to \NN$. It is \emph{fixed-parameter tractable} (FPT) if there is a computable function $f$, a constant $c \in \mathbb N$, and a deterministic algorithm $\mathbb{A}$ that computes $P(x)$ in time $O(f(\kappa(x))\cdot |x|^c)$ for all $x \in \Sigma^\ast$. We write $\pindsub{\Phi}$ for the parametrized problem that gets as input a graph $G$ and a \emph{parameter} $k$, and computes $\numindsub{\Phi^ {(k)}}{G}$.

For lower bounds, we rely on the \emph{Exponential-Time Hypothesis} ($\ETH$)~\cite{IPZ01}, which asserts the existence of some $\varepsilon > 0$ such that the Boolean satisfiability problem on $n$-variable $3$-CNF formulas cannot be solved in time $O(2^{\varepsilon \cdot n})$ (see also \cite[Conjecture 14.1]{CyganFKLMPPS15}). The \emph{Strong Exponential-Time Hypothesis} ($\SETH$) \cite{IP01} states that for all $\varepsilon > 0$, there is a $k \geq 3$ such that Boolean Satisfiability on $k$-CNF formulas cannot be solved in $O(2^{(1 - \varepsilon) \cdot n})$ (see also \cite[Conjecture 14.2]{CyganFKLMPPS15}). 

\section{Main Result}

To present the idea underlying the tractability of scorpions, we first consider a variant for directed graphs, where this idea becomes particularly simple.
Then we introduce scorpions and their generalizations and prove the claimed upper and lower complexity bounds.
Finally, we observe that scorpions show that some known complexity lower bounds are tight.

\subsection{Directed Graphs Containing a Sink}
\label{sec:sink}
As a warm-up, we consider a property $\sinkprop$ of directed graphs that is nontrivial but yields a linear-time counting problem $\pindsub{\sinkprop}$.
For this subsection, we momentarily consider directed graphs without antiparallel edges, i.e., at most one of the edges $(u,v)$ and $(v,u)$ may be present.
The property $\sinkprop$ is defined to hold on $H$ if there is a \emph{sink} vertex $s\in V(H)$, i.e., a vertex $s$ such that $(u,s) \in E(H)$ for all $u \in V(H)\setminus\{s\}$.
This property is clearly nontrivial in every slice $k \in \mathbb N$, as there are $k$-vertex graphs with a sink and $k$-vertex graphs without a sink (e.g., an in-star versus the edgeless graph).

Towards an algorithm, we observe crucially that every graph without antiparallel edges contains at most one sink, since two distinct sinks $u,v$ would imply the presence of both edges $(u,v)$ and $(v,u)$.
Hence, the set of $k$-vertex sets containing a sink
\[\mathcal X \coloneqq \Big\{X \in \binom{V(G)}{k} ~\Big|~  G[X] \in \sinkprop\Big\}\]
can be partitioned, according to the unique sink, into
\[
\mathcal X = \bigcup_{v\in V(G)}\mathcal X_v \quad\text{with}\quad
\mathcal X_v \coloneqq \{X \in \mathcal X \mid v\text{ is the sink of }G[X]\}.\]
Finally, for fixed $v\in V(G)$, every set $X \in \mathcal X_v$ has the form $X = \{v,w_1,\ldots, w_{k-1}\}$ with all $w_1,\ldots, w_{k-1}$ pairwise distinct, distinct from $v$, and incoming neighbors of $v$, i.e., they satisfy $(w_i,v) \in E(G)$. 
Writing $\mathrm{in}_G(v)$ for the number of incoming neighbors of $v$, it follows that 
\[
|\mathcal X_v|= {\mathrm{in}_G(v) \choose {k-1}}.
\]
Combining the above equations, we readily obtain a linear-time algorithm for $\pindsub{\sinkprop}$ by computing $\mathrm{in}_G(v)$ for all $v\in V(G)$ and evaluating the resulting formula
\begin{equation}
    \label{eq:indsub:sink}
    \numindsub{\slice{\sinkprop}{k}}{G} = \sum_{v \in V(G)} |\mathcal{X}_v| = \sum_{v\in V(G)}{\mathrm{in}_G(v) \choose {k-1}}.
\end{equation}

\subsection{Generalized Scorpions}

The algorithmic idea for counting $k$-vertex graphs with a sink applies whenever the set $\mathcal X$ of induced subgraphs to be counted admits a partition into few sets $\mathcal X_i$ such that each $|\mathcal X_i|$ is easily determined.
More specifically, we consider partitions in which each $\mathcal X_i$ is determined by the manifestation of a special small set of uniquely identifiable vertices (e.g., the sink vertex), while the subgraph induced by the other vertices is irrelevant.

To apply this idea to undirected graphs,
we use a construction that was first presented in \cite{LenstraBE74} for the so-called \emph{evasiveness conjecture}, and which has become a standard example in this context (see, e.g., \cite[Section 13.1]{Kozlov08}):
A graph $H$ is a \emph{scorpion} if 
it can be obtained from an arbitrary graph $H'$ with $|V(H')|\geq 2$ by adding fresh vertices $b,t,s$,
making $b$ adjacent to all of $H'$, and then adding the edges $bt$ and $ts$.
The vertices $b,t,s$ are usually called \emph{body}, \emph{tail} and \emph{sting},
and it can be shown crucially (see Lemma \ref{lem:scorpion:unique} below) that these vertices are uniquely recoverable from $H$.
Similarly to the arguments above, this allows us (in Theorem~\ref{thm:alg-scorpion}) to design an efficient algorithm to count induced scorpions.

In this paper, we prove these statements for a slightly generalized version of scorpions, since this allows us to obtain gradually harder properties. 
Towards this end, we replace the tail vertex $t$ by a path of $\ell$ vertices $t_1,\dots,t_\ell$. See also Figure~\ref{fig:scorpion}.
\begin{definition}
\label{def:scorpion}
    For $\ell \in \mathbb{N}$, an \emph{$\ell$-scorpion} is a graph $H$ with $|V(H)|\geq \ell +4$ that admits a tuple of pairwise distinct vertices
    \[
(\ \underbrace{\hspace{5pt}b\hspace{5pt}}_{\text{body of }H},\ 
\underbrace{t_1, \dots, t_\ell}_{\text{tail of }H},\ 
\underbrace{\hspace{5pt}s\hspace{5pt}}_{\text{sting of }H}\ ) \in V(H)^{\ell+2},
\]
such that the following holds: Writing $Q:=\{b,t_1,\dots,t_\ell,s\}$ and calling the vertices in $V(H) \setminus Q$ the \emph{legs} of $H$, we have that 
    \begin{itemize}
        \item the graph $H[Q]$ is an induced path from the body $b$ to the sting $s$,
        \item the body $b$ is adjacent to all legs, and
        \item the body $b$ is the only vertex in $Q$ adjacent to legs.
    \end{itemize} 
    We define $\scorp{\ell}$ as the class of all $\ell$-scorpions.
\end{definition}

The property $\Psi_\ell$ is non-meager for all $\ell \geq 1$:
Indeed, for $k \geq \ell + 4$, at least one $k$-vertex scorpion exists (e.g., consisting of the sting, tail, body, and an independent set of legs), while the $k$-vertex graph $K_k$ is not a scorpion.

Note that scorpion graphs are $1$-scorpions.
Also note that the definition speaks about ``the'' body, tail, and sting vertices, as if they were unique. Indeed, they are:

\begin{lemma} \label{lem:scorpion:unique}
    If $H$ is an $\ell$-scorpion, then its body, tail, and sting are unique.
\end{lemma}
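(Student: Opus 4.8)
The plan is to fix an $\ell$-scorpion $H$ together with one witnessing tuple $(b, t_1, \dots, t_\ell, s)$, and to show that any other witnessing tuple $(b', t_1', \dots, t_\ell', s')$ coincides with it. I would first observe that the partition $V(H) = \{b\} \cup \{t_1,\dots,t_\ell\} \cup \{s\} \cup L$ (with $L$ the set of legs) has the following structural signature: the body $b$ has degree $|L| + 1$ (all legs plus $t_1$); each tail vertex $t_i$ for $2 \le i \le \ell-1$ has degree $2$; the vertex $t_1$ has degree $2$ (namely $b$ and $t_2$) when $\ell \ge 2$, and degree $2$ (namely $b$ and $s$) when $\ell = 1$, wait --- for $\ell=1$ the tail is a single vertex $t_1$ adjacent to $b$ and $s$; the sting $s$ has degree $1$; and each leg has degree equal to its degree inside $H[L]$ plus $1$ (for the edge to $b$). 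The key quantitative point is that $b$ is adjacent to \emph{every} leg and to $t_1$, so $\deg_H(b) = |L| + 1 \ge |V(H)| - (\ell+2) + 1 = |V(H)| - \ell - 1$, whereas every vertex outside $\{b\} \cup L$ has degree at most $2$, and every leg has degree at most $|L|$ (at most $|L|-1$ neighbours among legs, plus $b$). Since $|V(H)| \ge \ell + 4$, one checks $|L| = |V(H)| - \ell - 2 \ge 2$, so $\deg_H(b) = |L|+1 \ge 3 > 2$, hence $b$ is strictly higher-degree than every tail or sting vertex. This does not yet separate $b$ from high-degree legs, so the argument must be more careful than a pure degree count.

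The cleaner route, which I would actually carry out, is to identify the sting first, then the tail, then the body, using only the adjacency structure rather than degrees. The sting is a degree-$1$ vertex whose unique neighbour lies on a long induced path. More precisely: in any witnessing tuple, $s$ is the unique endpoint of the induced path $H[Q]$ that is not adjacent to any leg, and $s$ has degree $1$. I would argue that the set $Q = \{b, t_1, \dots, t_\ell, s\}$ is itself determined by $H$: a vertex $v$ lies in $L$ if and only if $v$ is adjacent to a vertex of degree $\ge 3$ all of whose other behaviour matches the body — hmm, this is getting circular. Let me instead pin down $Q$ directly: consider the graph $H - b$ obtained by deleting the (yet-to-be-identified) body. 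I claim $Q \setminus \{b\} = \{t_1,\dots,t_\ell,s\}$ is exactly the set of vertices that form a connected component which is a path on $\ell+1$ vertices in $H - b$, provided the legs together with $b$ form the rest. Actually the robust claim is: \emph{the legs are precisely the vertices of $V(H) \setminus Q$}, and I would prove $Q$ is canonical by showing $b$ is the unique cut vertex separating $\{t_1,\dots,t_\ell,s\}$ from $L$ — but if $|L| \ge 2$ and $H[L]$ is disconnected, $b$ may not be the \emph{only} cut vertex.

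So here is the plan I would commit to. Step 1: Show $s$ is unique. In any witnessing tuple, $s$ has degree exactly $1$ and its neighbour $t_\ell$ has the property that $t_\ell$ has degree $2$ (if $\ell \ge 2$; if $\ell = 1$, $t_\ell = t_1$ has degree $2$, neighbours $b$ and $s$). Walking from $s$ along the induced path $H[Q]$, the vertices $s, t_\ell, t_{\ell-1}, \dots, t_1$ all have degree $2$ except possibly... no: $t_1$ also has degree $2$ in $H$ (neighbours $b, t_2$, or $b,s$ if $\ell=1$), since $t_1 \notin$ legs and the only tail-leg adjacencies are forbidden. Only $b$ has large degree. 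Hence $Q$ induces a path $P$ of length $\ell+1$ edges in which one endpoint ($b$) has degree $|L|+1 \ge 3$ in $H$ and the internal vertices plus the other endpoint ($s$) all have degree $\le 2$ in $H$, with $s$ having degree exactly $1$. Step 2: I claim the sequence $s = u_0, u_1 = t_\ell, \dots, u_\ell = t_1, u_{\ell+1} = b$ is recoverable greedily: start at \emph{any} degree-$1$ vertex $u_0$ whose ``maximal degree-$\le 2$ path'' has length exactly $\ell$ before hitting a degree-$\ge 3$ vertex. I must rule out that some leg is also such a starting point — a leg $v$ has degree $\ge 2$ unless $H[L]$ has $v$ isolated, in which case $\deg_H(v) = 1$ and $v$'s only neighbour is $b$, which has degree $\ge 3$: so the path from $v$ has length $0$, not $\ell \ge 1$, before hitting a high-degree vertex. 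This distinguishes legs from $s$. I must also handle the case where multiple degree-$1$ vertices exist with long paths — but by the structure, the only such vertex is $s$, because any other degree-$1$ vertex is an isolated-in-$H[L]$ leg, handled above, OR would have to be $t_i$, impossible since those have degree $2$. Step 3: Once $s$ is forced to be unique, its unique neighbour is $t_\ell$, whose other neighbour is $t_{\ell-1}$, etc., and $t_1$'s neighbour other than $t_2$ (or than $s$, if $\ell=1$) is $b$; this recovers the whole tuple uniquely.

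The main obstacle I anticipate is Step 2: cleanly ruling out that a high-degree \emph{leg} could play the role of $b$, or that a leg could masquerade as $s$ or a tail vertex. The fix is exactly the observation above — every neighbour of a leg (other than $b$) is itself a leg, so legs cannot be adjacent to $s$ or to any $t_i$; combined with $\deg_H(s) = 1$ and $\deg_H(t_i) = 2$ for all $i$ and $\deg_H(b) = |L| + 1 \ge 3$, the ``unique long degree-$\le 2$ pendant path'' characterisation of $s$ and hence of the whole tuple goes through. I would also need the small edge case $\ell = 1$ checked separately (it is the original scorpion, where $t_1$ is adjacent to both $b$ and $s$), but it fits the same pattern with the path $H[Q]$ having three vertices $b - t_1 - s$.
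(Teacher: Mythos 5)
Your committed plan (Steps 1--3) is correct: the sting is the unique degree-$1$ vertex from which the maximal walk through degree-$\le 2$ vertices has length $\ell\ge 1$ before reaching a vertex of degree $\ge 3$ (degree-$1$ legs fail this because their only neighbour is the body, of degree $|L|+1\ge 3$), and since all intermediate vertices have degree exactly $2$ the walk deterministically recovers $t_\ell,\dots,t_1$ and finally $b$. This yields a canonical recovery of the whole tuple from $H$, hence uniqueness. The paper proceeds in the opposite order and more directly: it identifies the body first as the unique vertex of degree $k-\ell-1$ (every leg has degree at most $k-\ell-2$, every tail/sting vertex at most $2\le k-\ell-2$), then the sting as the unique degree-$1$ vertex among the non-neighbours of $b$, and finally each $t_i$ as the unique vertex at distance $\ell-i+1$ from $s$. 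Both arguments are elementary degree/distance recoveries; yours buys nothing extra but costs a slightly more delicate pendant-path analysis and the separate $\ell=1$ check.

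One remark worth correcting: your claim that the degree count ``does not yet separate $b$ from high-degree legs'' is wrong, and your own bounds refute it --- you note $\deg_H(b)=|L|+1$ while every leg has degree at most $|L|$ (its neighbours are $b$ and other legs, never tail or sting), so $b$ is the \emph{strict} maximum-degree vertex. That one-line observation is exactly the paper's identification of the body, and had you trusted it, the rest of your argument would collapse to the paper's two remaining sentences. As written, though, your alternative route has no gap.
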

\begin{proof}
    Let $k = |V(H)|$ and recall that $k \geq \ell +4$ by Definition~\ref{def:scorpion}.
    The body has degree $k - \ell - 1 \geq 3$.
    Since every leg has degree at most $k - \ell - 2$, we conclude that $H$ contains exactly one vertex of degree $k - \ell - 1$; this uniquely identifies $b$.

    The vertices not adjacent to $b$ are precisely $t_2,\ldots,t_\ell,s$.
    Among these, $s$ is the only vertex of degree $1$ and is thus uniquely identified.
    The vertices $t_1,\dots,t_\ell$ are uniquely identified by their distance from $s$, since vertex $t_i$ is the only vertex at distance $\ell-i+1$ from $s$.
\end{proof}

We use this lemma to show that $\pindsub{\Psi_\ell}$ can be solved in polynomial time for every fixed $\ell \geq 1$, similarly to the algorithm presented in Section~\ref{sec:sink}.

\begin{theorem}
    \label{thm:alg-scorpion}
    There is an algorithm that,
    given $k \geq \ell + 4$ and an $n$-vertex graph $G$ as input, 
    computes $\numindsub{\slice{\scorp{\ell}}{k}}{G}$ in time $O(\ell \cdot n^{\ell + 3})$.
\end{theorem}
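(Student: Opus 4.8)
The plan is to follow the template of the sink algorithm from Section~\ref{sec:sink}: we use Lemma~\ref{lem:scorpion:unique} to partition the $k$-vertex sets to be counted according to which vertices play the roles of body, tail and sting, and we show that each block of the partition has a size expressible by a single binomial coefficient. Fix the input graph $G$ on $n$ vertices and the parameters $k \ge \ell+4$; store $G$ as an adjacency matrix so that adjacency queries cost $O(1)$. We may assume $k \le n$, since otherwise $\numindsub{\slice{\scorp{\ell}}{k}}{G}=0$; in particular then $\ell \le k-4 < n$. Let $\mathcal X \coloneqq \{X \in \binom{V(G)}{k} : G[X] \in \scorp{\ell}\}$. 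By Lemma~\ref{lem:scorpion:unique}, every $X \in \mathcal X$ determines a \emph{unique} tuple $\tau(X) = (b,t_1,\dots,t_\ell,s) \in V(G)^{\ell+2}$ formed by its body, tail and sting, so
\[
\numindsub{\slice{\scorp{\ell}}{k}}{G} \;=\; |\mathcal X| \;=\; \sum_{\tau \in V(G)^{\ell+2}} |\mathcal X_\tau|
\qquad\text{where}\qquad
\mathcal X_\tau \coloneqq \{X \in \mathcal X : \tau(X) = \tau\} .
\]

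The next step is to determine $|\mathcal X_\tau|$ for a fixed tuple $\tau = (b,t_1,\dots,t_\ell,s)$ with underlying vertex set $Q \coloneqq \{b,t_1,\dots,t_\ell,s\}$. Call $\tau$ \emph{admissible} if its $\ell+2$ entries are pairwise distinct and the only edges of $G$ with both endpoints in $Q$ are $bt_1, t_1t_2, \dots, t_{\ell-1}t_\ell, t_\ell s$; equivalently, $G[Q]$ is the induced path listing $b,t_1,\dots,t_\ell,s$ in this order. For admissible $\tau$ set
\[
A_\tau \;\coloneqq\; \bigl\{\, v \in V(G) \setminus Q \;:\; vb \in E(G) \text{ and } vt_1,\dots,vt_\ell,vs \notin E(G) \,\bigr\} .
\]
I would then show $\mathcal X_\tau = \emptyset$ whenever $\tau$ is not admissible, and $\mathcal X_\tau = \{\,Q \cup L : L \in \binom{A_\tau}{\,k-\ell-2\,}\,\}$ whenever $\tau$ is admissible, so that $|\mathcal X_\tau| = \binom{|A_\tau|}{k-\ell-2}$ (using $\binom{m}{j}=0$ for $m<j$). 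Both inclusions follow directly from Definition~\ref{def:scorpion}. For one direction: if $\tau$ is admissible and $L \subseteq A_\tau$ with $|L|=k-\ell-2$, then $X \coloneqq Q \cup L$ has exactly $k$ vertices, $G[Q]$ is the required induced path, the body $b$ is adjacent to every leg (as $L \subseteq N_G(b)$), and $b$ is the only vertex of $Q$ with a neighbour among the legs (as no vertex of $L$ is adjacent to any $t_i$ or to $s$); hence $G[X] \in \scorp{\ell}$ with witness tuple $\tau$, and by Lemma~\ref{lem:scorpion:unique} this is its unique body/tail/sting tuple, i.e.\ $\tau(X)=\tau$. Conversely, if $\tau(X)=\tau$ then $Q \subseteq X$ and $\tau$ is admissible by Definition~\ref{def:scorpion} (pairwise-distinct entries and $G[Q]$ the induced path), while the second and third bullets of Definition~\ref{def:scorpion} force every vertex of $X \setminus Q$ into $A_\tau$, so $X \setminus Q \in \binom{A_\tau}{k-\ell-2}$.

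Combining, the quantity to output equals
\[
\numindsub{\slice{\scorp{\ell}}{k}}{G} \;=\; \sum_{\substack{\tau \in V(G)^{\ell+2}\\ \tau \text{ admissible}}} \binom{|A_\tau|}{k-\ell-2},
\]
and the algorithm evaluates this sum. It first tabulates all binomial coefficients $\binom{m}{m'}$ for $0 \le m',m \le n$ by Pascal's triangle in $O(n^2)$ arithmetic operations. Then it enumerates all $n^{\ell+2}$ tuples $\tau$; for each it tests admissibility using $O(\ell^2)$ adjacency queries, and for each admissible $\tau$ it computes $|A_\tau|$ by scanning the $n$ vertices of $G$ with $O(\ell)$ adjacency queries per vertex (time $O(\ell n)$) and adds the tabulated value $\binom{|A_\tau|}{k-\ell-2}$. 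The running time is dominated by $n^{\ell+2}\cdot O(\ell n)=O(\ell\, n^{\ell+3})$; the admissibility tests add $O(\ell^2 n^{\ell+2})$, which is $O(\ell\, n^{\ell+3})$ because $\ell<n$, and the $O(n^2)$ preprocessing is dominated since $\ell+3 \ge 4$.

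I do not expect a genuine obstacle once Lemma~\ref{lem:scorpion:unique} is available; the only point requiring real care is the exact description of $\mathcal X_\tau$. One must use the third bullet of Definition~\ref{def:scorpion} to conclude that the legs are non-adjacent to the \emph{entire} tail $t_1,\dots,t_\ell$ and to the sting $s$ (not merely to the body $b$), and one must note that the subgraph induced on the legs is otherwise completely unconstrained; this is why $|\mathcal X_\tau|$ is a single binomial coefficient, and, together with the uniqueness of $\tau(X)$, why the blocks $\mathcal X_\tau$ genuinely partition $\mathcal X$ and no set $X$ is double-counted. A minor technical point: the binomial coefficients can have $\Theta(n)$ bits, so the claimed bound is to be understood in the arithmetic model (it absorbs only a further polynomial factor under a bit-complexity model).
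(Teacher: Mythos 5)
Your proof is correct and takes essentially the same route as the paper: partition the scorpion-inducing $k$-sets by their (unique, via Lemma~\ref{lem:scorpion:unique}) body/tail/sting tuple, observe that only path-inducing (``admissible'') tuples contribute, and count the legs by a single binomial coefficient, summing over all $n^{\ell+2}$ tuples. Your binomial index $k-\ell-2$ is the correct one (the paper's displayed formula \eqref{eq:partition-scorpion-classsize} contains a typo, writing $k-\ell+2$), and your explicit admissibility check and tabulation of binomials only make the same argument slightly more careful.
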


\begin{proof}
    Let $\mathcal X$ be the set of induced $k$-vertex graphs in $G$ that are $\ell$-scorpions.
    In our proof, we use the uniqueness of body, tail, and sting to partition $\mathcal X$ into classes $\mathcal X_\mathbf q$ for $q \in V(G)^{\ell+2}$ and then determine each $|\mathcal X_q|$ in linear time. Then the algorithm follows from
    \begin{equation}
    \label{eq:partition-scorpions}
        \numindsub{\slice{\scorp{\ell}}{k}}{G} = |\mathcal X| = \sum_{\mathbf q \in V(G)^{\ell+2}} |\mathcal X_\mathbf q|.
    \end{equation}
    
    More specifically, let $\mathcal P\subseteq V(G)^{\ell+2}$ denote the tuples inducing a path in $G$. Given $\mathbf{q} \in \mathcal P$ with $\mathbf{q} = (b, t_1, \dots, t_\ell, s)$, let $\mathcal X_\mathbf q \subseteq \mathcal X$ denote the set of $k$-vertex $\ell$-scorpions in $G$ with body $b$, tail $t_1,\ldots,t_\ell$ and sting $s$.
    By Lemma~\ref{lem:scorpion:unique}, the sets $\mathcal X_\mathbf q$ for $\mathbf q \in \mathcal P$ partition $\mathcal X$, so \eqref{eq:partition-scorpions} holds with $\mathcal X_\mathbf q = \emptyset$ for $\mathbf q \notin \mathcal P$.
    It remains to determine $|\mathcal X_\mathbf q|$ for $\mathbf{q} = (b, t_1, \dots, t_\ell, s) \in \mathcal P$.
    We show that 
    \begin{equation}
    \label{eq:partition-scorpion-classsize}
        |\mathcal X_\mathbf q| = \binom{|X_G(\mathbf q)|}{k-\ell+2}
        \quad \text{with}\ 
        X_G(\mathbf q) \coloneqq \Biggl | N_G(b) \setminus \bigcup_{v\in\{t_1,\ldots,t_\ell ,s\}}N_G(v) \Biggr|.
    \end{equation}
    
    Indeed, since the $\ell+2$ vertices in $\mathbf q$ induce a path, the set $\mathcal X_\mathbf q$ consists of all $k$-vertex sets $X$ that contain $\{b, t_1, \dots, t_\ell, s\}$ and $k-\ell-2$ additional vertices (the legs) that are adjacent to $b$ and not adjacent to any $t_i$ or $s$.
    The number of such sets $X$ is precisely $|\mathcal X_\mathbf q|= {X_G(\mathbf q)\choose k-\ell+2}$.

    An algorithm with the claimed running time follows from evaluating \eqref{eq:partition-scorpions} term by term while using \eqref{eq:partition-scorpion-classsize} to determine $|\mathcal X_\mathbf q|$. Indeed, $\mathcal P$ can be enumerated in time $O(n^{\ell+2})$ and $|\mathcal X_\mathbf q|$ for $\mathbf q \in \mathcal P$ can be computed via \eqref{eq:partition-scorpion-classsize} in time $O(\ell n)$.
\end{proof}

Since $\Psi_\ell$ is non-meager for every $\ell \geq 1$, Theorem~\ref{thm:alg-scorpion} refutes Conjecture~\ref{conj:non-meager-hard} even with $\ell = 1$.

\subsection{Lower Bounds Based on ETH}

Next, we show that the running time obtained in \Cref{thm:alg-scorpion} is essentially optimal under $\ETH$.
In particular, by choosing $\ell$ as a function of $k$, we can use $\ell$-scorpions to obtain properties with varying computational difficulty.

To obtain the lower bound, we rely on a result from \cite{CurticapeanDN025,CurticapeanN25} that provides a lower bound based on the set of possible Hamming weights attained by a property $\Phi$:
Given a graph property $\Phi$ and $k \in \NN$, we say that $\Phi$ \emph{attains weight $\ell$ on slice $k$} if there is a graph $H\in\Phi$ with $k$ vertices and $\ell$ edges. It \emph{avoids weight $\ell$ on slice $k$} if it does not attain it.
Note that every property attains between $0$ and ${k \choose 2}+1$ weights on slice $k$.

\begin{theorem}[{\cite[Lemma 5.1]{CurticapeanN25}} \& {\cite[Theorem 7.1]{CurticapeanDN025}}]
    \label{thm:hardness-hw}
    Assuming $\ETH$, there are $N_0,\delta > 0$ such that the following holds:
    If $k \geq N_0$ and $0 < d \leq k/2$ and $\Phi$ avoids at least $d \cdot k$ distinct weights and attains at least one weight, all on slice $k$, then no algorithm computes $\numindsub{\slice{\Phi}{k}}{G}$ in time $O(n^{\delta \cdot d})$.
\end{theorem}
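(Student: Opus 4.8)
The plan is to use the standard reduction of induced-subgraph counting to homomorphism counting together with complexity monotonicity \cite{CurticapeanDM17}. For the fixed slice $\slice{\Phi}{k}$ one writes
\[
\numindsubstar{\slice{\Phi}{k}} \;=\; \sum_{H}\; \altEnum{\slice{\Phi}{k}}{H}\cdot \numhomstar{H},
\]
where $H$ ranges over isomorphism types of graphs on at most $k$ vertices and the coefficients $\altEnum{\slice{\Phi}{k}}{H}$ are obtained from $\slice{\Phi}{k}$ by a two-step Möbius inversion (first from homomorphism counts to edge-subset counts, then to induced-subgraph counts). By complexity monotonicity, an algorithm for $\numindsub{\slice{\Phi}{k}}{G}$ running in time $T(n)$ yields, for every $H$ with $\altEnum{\slice{\Phi}{k}}{H}\neq 0$, an algorithm for $\numhom{H}{G}$ in time $T(n)\cdot n^{O(1)}$. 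Hence it suffices to exhibit a single nonzero coefficient on a pattern $H$ that contains a clique on $\Omega(d)$ vertices as a connected component: if $H$ is the disjoint union of $K_t$ with $t=\Omega(d)$ and an edgeless graph, then $\numhom{H}{G}=\numhom{K_t}{G}$ times a trivially computable factor, so an $n^{o(d)}$-time algorithm for $\numindsub{\slice{\Phi}{k}}{\cdot}$ would count $t$-cliques in time $n^{o(t)}$, contradicting \cite{ChenHKX06}. (One could instead settle for a bounded-degree pattern of treewidth $\Omega(d)$ and the corresponding homomorphism-counting lower bound.)

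Everything thus reduces to the combinatorial core: \emph{if $\slice{\Phi}{k}$ avoids at least $dk$ weights and attains at least one, then $\altEnum{\slice{\Phi}{k}}{H}\neq 0$ for some pattern $H$ carrying an $\Omega(d)$-clique.} A first, robust ingredient is a polynomial-roots argument: fix a graph $H^\ast$ with $\Phi(H^\ast)=1$ and a maximal chain $\IS_k=F_0\subsetneq F_1\subsetneq\cdots\subsetneq F_{\binom{k}{2}}=K_k$ in the edge-subset lattice passing through $H^\ast$. Along the chain the $0/1$ sequence $i\mapsto\Phi(F_i)$ vanishes at every avoided-weight index (there are at least $dk$ of them) while equalling $1$ at the index $|E(H^\ast)|$; if all order-$dk$ finite differences of this sequence vanished, the sequence would agree with a polynomial of degree $<dk$ that has at least $dk$ roots yet is not identically zero---impossible. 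So some order-$dk$ finite difference is nonzero, and by the standard correspondence between finite differences along a chain and Möbius coefficients (a local Möbius sum over an edge set $T$ being nonzero forces $\altEnum{\slice{\Phi}{k}}{H}\neq 0$ for some $H$ with $E(H)\supseteq T$), we obtain a nonzero coefficient on a pattern $H$ whose edge set contains $dk$ prescribed ``active'' edges of the chain. The remaining---and, I expect, hardest---step is to choose the chain (i.e., the order in which its edges are added, and more flexibly a decomposition of $E(K_k)$ into cliques of size $\Theta(\sqrt{dk})=\Omega(d)$, using $d\le k/2$) so that wherever the nonvanishing difference occurs, the corresponding active edge set is forced to contain a full clique of that size, and hence $H$ carries an $\Omega(d)$-clique. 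The delicate point is that, a priori, the nonvanishing coefficient could land on a sparse, low-treewidth pattern; ruling this out requires tracking how the number $\ge dk$ of avoided weights, the length of the chain, and the enforced clique size interact, and this book-keeping---yielding suitable constants $N_0,\delta$ in combination with the clique-counting lower bound of \cite{ChenHKX06}---is exactly what \cite[Lemma 5.1]{CurticapeanN25} and \cite[Theorem 7.1]{CurticapeanDN025} carry out.

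Equivalently, one can present the whole argument as a direct parsimonious reduction, which may make the gadget design more transparent: to a host graph in which one counts $\Omega(d)$-vertex colourful cliques, attach a padding gadget whose internal edges and edges to the host are tuned so that the only $k$-vertex induced subgraphs of the combined graph satisfying $\Phi$ are exactly ``gadget plus clique'', using that every competing edge count falls among the avoided weights on slice $k$. This keeps $k$ a constant depending only on the slice, blows the instance up by a polynomial factor, and transfers the $\ETH$ lower bound of \cite{ChenHKX06} to $\pindsub{\Phi}$ restricted to slice $k$, completing the proof.
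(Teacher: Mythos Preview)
First, note that the paper does not prove this theorem: it is imported from \cite{CurticapeanN25} and \cite{CurticapeanDN025} as a black box, so there is no in-paper argument to compare your sketch against.

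On the substance: your overall strategy (pass to a pattern-count basis, invoke complexity monotonicity, and locate a nonzero coefficient on a pattern of large treewidth) is standard and is how the cited references proceed. But your combinatorial core has a genuine gap. You claim that a nonzero order-$dk$ finite difference of $i\mapsto\Phi(F_i)$ along a single maximal chain forces $\altEnum{\slice{\Phi}{k}}{H}\neq 0$ for some $H$ whose edge set \emph{contains} the $dk$ consecutive chain edges $T=\{e_{j+1},\dots,e_{j+dk}\}$. This does not follow. Writing $g(S)$ for the alternating enumerator of $K_k[S]$ and using M\"obius inversion $\Phi(F_i)=\sum_{S\subseteq E(F_i)}g(S)$, one finds
\[
\Delta^{m}\Phi(F_j)=\sum_{r=1}^{m}\Bigl(\sum_{i=r}^{m}(-1)^{m-i}\tbinom{m}{i}\Bigr)\sum_{\substack{S\subseteq E(F_{j+m})\\ \max\{t:e_t\in S\}=j+r}} g(S),
\]
so a nonzero difference only guarantees some $g(S)\neq 0$ with $S\subseteq E(F_{j+m})$ that contains the \emph{single} edge $e_{j+r}$ for some $r$; nothing forces $S\supseteq T$, and $S$ could well be a lone edge plus isolated vertices. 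A one-dimensional finite difference along a chain cannot isolate a full Boolean-interval M\"obius sum, and no ordering of the chain edges rescues this step. Consequently your conclusion that the surviving pattern carries an $\Omega(d)$-clique is unsupported. (A side issue: in this paper $\altEnum{\cdot}{\cdot}$ denotes the \emph{subgraph}-basis coefficient, not the homomorphism-basis coefficient appearing in your opening display; the distinction matters for which monotonicity lemma you may invoke.)

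Your final paragraph proposing a direct gadget reduction is closer in spirit to a workable argument, but as written it is not checkable: you would have to explain why the $\ge dk$ avoided weights suffice to kill \emph{every} competing induced $k$-vertex subgraph of the combined instance, not merely those whose total edge count happens to be avoided, and how the gadget is chosen knowing nothing about $\Phi$ beyond its weight profile on slice $k$.
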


For $k \geq \ell + 4$, we can easily determine the number of weights attained by $\Psi_\ell$ on slice $k$:
Since there is no freedom in choosing edges incident to body, tail and sting, this number is precisely $\binom{k - \ell - 2}{2} + 1$.
Thus, the number of weights avoided by  $\Psi_\ell$ on slice $k$ is
\[\binom{k}{2} - \binom{k - \ell - 2}{2} - 1 = k (\ell + 2) - \frac{\ell (\ell + 5)}{2} - 4.\]
For $k \geq 2\ell + 4$, it follows that $\Psi_\ell$ avoids at least $\ell/2 \cdot  k$ weights on slice $k$.
Being nontrivial, it attains at least one weight. 
Theorem~\ref{thm:hardness-hw} readily implies:

\begin{corollary}
    \label{cor:eth-scorpions}
    Assuming $\ETH$, there are $N_0,\delta > 0$ such that the following holds:
    If $k \geq N_0$ and $0 < \ell \leq (k-4)/2$, then no algorithm computes $\numindsub{\slice{\scorp{\ell}}{k}}{G}$ in time $O(n^{\delta \cdot \ell})$.
\end{corollary}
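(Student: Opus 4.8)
The plan is to derive Corollary~\ref{cor:eth-scorpions} as an immediate application of Theorem~\ref{thm:hardness-hw} to the property $\Phi = \scorp{\ell}$, restricted to slice $k$. The only thing to verify is that, for the given range of $k$ and $\ell$, the hypotheses of Theorem~\ref{thm:hardness-hw} are satisfied with $d$ equal to (a constant times) $\ell$. Concretely, I would set $d \coloneqq \ell/2$ and check three things: first, $0 < d \leq k/2$; second, $\scorp{\ell}$ attains at least one weight on slice $k$; third, $\scorp{\ell}$ avoids at least $d \cdot k = (\ell/2)\cdot k$ distinct weights on slice $k$.

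The second point is the easiest: as already noted right after Definition~\ref{def:scorpion}, for $k \geq \ell+4$ there exists a $k$-vertex $\ell$-scorpion (take the body, the tail path $t_1,\dots,t_\ell$, the sting $s$, and $k-\ell-2$ legs forming an independent set), so $\scorp{\ell}$ is nontrivial and hence attains at least one weight. The third point is the arithmetic already carried out in the excerpt: in any $\ell$-scorpion on $k$ vertices, all edges incident to $Q = \{b,t_1,\dots,t_\ell,s\}$ are forced (the path edges $bt_1, t_1t_2,\dots,t_{\ell-1}t_\ell,t_\ell s$ are present, the $k-\ell-2$ edges from $b$ to the legs are present, and no other $Q$-incident edge is present), so the only freedom is in the $\binom{k-\ell-2}{2}$ potential edges among the legs, any subset of which can be realized. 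Hence $\scorp{\ell}$ attains exactly $\binom{k-\ell-2}{2}+1$ weights and avoids exactly $\binom{k}{2} - \binom{k-\ell-2}{2} - 1 = k(\ell+2) - \ell(\ell+5)/2 - 4$ weights on slice $k$. A short estimate shows this quantity is at least $(\ell/2)\cdot k$ once $k \geq 2\ell+4$; since the corollary assumes $\ell \leq (k-4)/2$, i.e. $k \geq 2\ell+4$, this bound applies. For the first point, $d = \ell/2 > 0$ since $\ell > 0$, and $d = \ell/2 \leq (k-4)/4 \leq k/2$, again using $\ell \leq (k-4)/2$.

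Having checked all three hypotheses with $d = \ell/2$, Theorem~\ref{thm:hardness-hw} yields constants $N_0, \delta' > 0$ such that for $k \geq N_0$ no algorithm computes $\numindsub{\slice{\scorp{\ell}}{k}}{G}$ in time $O(n^{\delta' \cdot \ell/2})$. Setting $\delta \coloneqq \delta'/2$ gives exactly the statement of the corollary. There is no real obstacle here: the work is entirely bookkeeping on the Hamming-weight count, which the excerpt has already performed, plus a transparent substitution into the black-box theorem. The one point that warrants a sentence of care is confirming that the interval of avoided weights is genuinely a contiguous or at least large enough set of distinct integers; but since every subset of the $\binom{k-\ell-2}{2}$ leg-edges is allowed, the attained weights form exactly the interval $\{0,1,\dots,\binom{k-\ell-2}{2}\}$ (shifted by the fixed number of forced edges), so the attained weights are $\binom{k-\ell-2}{2}+1$ consecutive integers and the remaining $\binom{k}{2}+1$ minus that many integers in $\{0,\dots,\binom{k}{2}\}$ are all avoided, as required.
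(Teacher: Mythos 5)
Your proposal is correct and follows essentially the same route as the paper: count the attained weights as $\binom{k-\ell-2}{2}+1$, deduce that at least $(\ell/2)\cdot k$ weights are avoided once $k \geq 2\ell+4$, and apply Theorem~\ref{thm:hardness-hw} with $d=\ell/2$, absorbing the factor $2$ into $\delta$. The extra remarks about contiguity of the attained weights are unnecessary (the theorem only needs distinct avoided weights) but harmless.
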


\begin{remark}
    The algorithm from Theorem \ref{thm:alg-scorpion} also demonstrates that the lower bound in Theorem \ref{thm:hardness-hw} cannot be further improved.
    Indeed, $\Psi_\ell$ avoids at most $(\ell + 2) \cdot k$ weights on slice $k$, but $\numindsub{\slice{\Phi}{k}}{G}$ can be solved in $O(n^{\ell+3})$ time.
    Therefore, $\ell$-scorpions show that the bound in Theorem \ref{thm:hardness-hw} is essentially tight.
\end{remark}

With the lower bound of Corollary~\ref{cor:eth-scorpions} at hand, we can obtain properties with varying computational complexity, by choosing $\ell$ dependent on $k$.
For example, setting $\ell \approx \sqrt{k}$ yields a property $\Psi_{\mathrm{sqrt}}$ for which $\pindsub{\Psi_{\mathrm{sqrt}}}$ takes $n^{\Theta(\sqrt{k})}$ time under $\ETH$.
More generally, consider a monotone increasing $f\colon \NN \to \NN$ with $1 \leq f(k) \leq (k-4)/2$ for all $k \in \NN$.
We define $\Psi_f = \bigcup_{k \in \NN} \slice{\Psi_{f(k)}}{k}$ to contain exactly the $k$-vertex $f(k)$-scorpions for all $k \in \NN$.

\begin{corollary}
    \label{cor:eth-scorpions-function}
    Let $f\colon \NN \to \NN$ be monotone increasing with $1 \leq f(k) \leq (k-4)/2$.
    Then $\pindsub{\Psi_f}$ can be solved in time $O(k n^{f(k) + 3})$, and assuming $\ETH$, not in time $O(n^{\alpha \cdot f(k)})$ for a fixed constant $\alpha > 0$.
\end{corollary}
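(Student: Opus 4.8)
The plan is to derive the two claimed bounds by instantiating \Cref{thm:alg-scorpion} and \Cref{cor:eth-scorpions} at the tail length $\ell = f(k)$, which now varies with the input. The one fact that makes this work is the defining identity $\slice{\Psi_f}{k} = \slice{\Psi_{f(k)}}{k}$ for every $k \in \NN$: counting induced $k$-vertex subgraphs of $G$ satisfying $\Psi_f$ is exactly counting induced $k$-vertex $f(k)$-scorpions in $G$.

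For the upper bound, I would, on input $(G,k)$, first evaluate $\ell := f(k)$ (this uses that $f$ is computable, which is implicit in $\pindsub{\Psi_f}$ being a well-posed algorithmic problem), then observe that $1 \leq \ell$ and $k \geq \ell + 4$ both follow from the hypothesis $1 \leq f(k) \leq (k-4)/2$, and finally invoke the algorithm of \Cref{thm:alg-scorpion} for this $\ell$. It returns $\numindsub{\slice{\scorp{f(k)}}{k}}{G} = \numindsub{\slice{\Psi_f}{k}}{G}$ in time $O(\ell \cdot n^{\ell+3}) = O(f(k) \cdot n^{f(k)+3})$, and since $f(k) \leq (k-4)/2 \leq k$ this lies in $O(k \cdot n^{f(k)+3})$. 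For the finitely many small $k$ on which $1 \leq f(k) \leq (k-4)/2$ cannot be satisfied, one simply sets $\slice{\Psi_f}{k}$ empty and outputs $0$; this does not affect the asymptotics.

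For the lower bound, let $\delta, N_0 > 0$ be the constants furnished by \Cref{cor:eth-scorpions} under $\ETH$, and set $\alpha := \delta$, a fixed constant. Suppose for contradiction that some algorithm $\mathbb{B}$ solves $\pindsub{\Psi_f}$ in time $O(n^{\alpha \cdot f(k)})$. Fix any $k \geq N_0$; by hypothesis $0 < f(k) \leq (k-4)/2$, so \Cref{cor:eth-scorpions} with $\ell := f(k)$ rules out computing $\numindsub{\slice{\scorp{f(k)}}{k}}{G}$ in time $O(n^{\delta \cdot f(k)})$. But restricting $\mathbb{B}$ to inputs of the form $(G,k)$ for this fixed $k$ gives precisely such an algorithm, since it computes $\numindsub{\slice{\Psi_f}{k}}{G} = \numindsub{\slice{\scorp{f(k)}}{k}}{G}$ in time $O(n^{\delta \cdot f(k)})$ with a hidden constant that depends only on $k$ --- a contradiction.

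The proof is largely bookkeeping, but the step I would be most careful about is the passage between the \emph{uniform} problem $\pindsub{\Psi_f}$, where $k$ is part of the input, and the \emph{slicewise} hardness of \Cref{cor:eth-scorpions}, which is a separate statement for each fixed $k$ and permits the running-time constant to depend on $k$. The conclusion ``$\pindsub{\Psi_f}$ is not solvable in time $O(n^{\alpha f(k)})$'' must be read as forbidding exactly such $k$-dependent constants --- matching the non-uniform form in which hardness is available --- and with that reading the reduction above is immediate. The remaining details (checking $k \geq \ell + 4$ for \Cref{thm:alg-scorpion}, the inequality $f(k) \leq k$, the computability of $f$, and discarding the degenerate small slices) are routine.
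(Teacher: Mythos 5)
Your proposal is correct and matches the paper's (implicit) argument exactly: the paper states \Cref{cor:eth-scorpions-function} as an immediate consequence of instantiating \Cref{thm:alg-scorpion} and \Cref{cor:eth-scorpions} with $\ell = f(k)$, using $\slice{\Psi_f}{k} = \slice{\Psi_{f(k)}}{k}$, which is precisely what you do. Your explicit attention to the slicewise reading of the lower bound and to the degenerate small $k$ is a reasonable (and welcome) elaboration of details the paper leaves unstated.
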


\section{An Updated Conjecture Using the Subgraph Basis}
\label{sec:conj}

Seeing what scorpions have done to Conjecture~\ref{conj:non-meager-hard}, it is natural to update the conjecture and reconsider the question when $\pindsub{\Phi}$ is fixed-parameter tractable.
Towards this, we first observe that our counterexamples can be generalized in a number of ways while keeping their key features. This suggests that a dichotomy theorem for $\pindsub{\Phi}$ will have to encompass various tractable cases, as the following facts indicate:
\begin{itemize}
    \item The key to the efficient algorithm in Theorem \ref{thm:alg-scorpion} is that body, tail and sting are uniquely 
    identified in every $\ell$-scorpion, and that no restrictions are imposed on the subgraph induced by the legs.
    As long as there is a constant-sized set of uniquely identifiable vertices and no restrictions on the remaining vertices, an efficient algorithm for $\pindsub{\Phi}$ follows.
    \item This however is not the final word: Observe that $\pindsub{\Phi}$ has the same complexity as $\pindsub{\neg\Phi}$, where $\neg\Phi$ contains exactly those graphs that are not contained in $\Phi$.
    Notably, the non-scorpion property $\neg\Psi_\ell$ contains the complete graph on $k$ vertices for every $k \geq \ell+4$, which arguably has no constant-sized set of ``uniquely identifiable'' vertices.
    \item Beyond, we can also ``nest'' easy properties:
    For example, let $\Lambda$ be the graph property containing all graphs $H$ with $|V(H)| \geq 9$ so that (a) $H$ is a $1$-scorpion, with some set of legs $X$, and (b) $H[X]$ is not a $2$-scorpion.
    Then $\pindsub{\Lambda}$ can be solved in polynomial time using similar arguments as in Theorem \ref{thm:alg-scorpion}.   
\end{itemize}

The diversity of these examples suggests that a complexity classification for all properties $\Phi$ may be quite intricate.
To obtain a new classification conjecture, we first express the induced subgraph counts arising in $\pindsub{\Phi}$ as linear combinations of (not necessarily induced) subgraph counts. Such basis changes among counting problems have already been used for $\pindsub{\Phi}$ before \cite{CurticapeanDN025,CurticapeanN25,DorflerRSW22,DoringMW24,DoringMW25,FockeR24,RothS20,RothSW24}.%
\footnote{Going further, the problem $\pindsub{\Phi}$ has already been expressed in the similar basis of \emph{homomorphism counts}, and a dichotomy criterion was shown in terms of this basis~\cite{CurticapeanDM17}. However, the transformation into the homomorphism basis renders combinatorial interpretations of $\Phi$ opaque and only yields an implicit dichotomy criterion. In particular, all complexity results listed above were achieved using the \emph{subgraph} basis rather than the homomorphism basis.}

In the following, recall that $\Phi^{(k)}$ for $k\in \mathbb N$ is the restriction of $\Phi$ to $k$-vertex graphs and let $\numindsubstar{\slice{\Phi}{k}}$ be the number of induced $k$-vertex graphs satisfying $\Phi^{(k)}$ in an input graph.
Likewise let $\numsubstar{H}$ denote the number of not necessarily induced $H$-subgraph copies in an input graph.
Then there exists a finite set $\mathcal H$ of unlabeled graphs on exactly $k$ vertices, and coefficients $\alpha_H$ for $H \in \mathcal H$, such that
\begin{equation}
    \label{eq:intro-sub-lincomb}
    \numindsubstar{\slice{\Phi}{k}} = \sum_{H \in \mathcal H}\alpha_{H} \cdot \numsubstar{H}.
\end{equation}
In fact, an explicit formula for the coefficients can be found through inclusion-exclusion:
Given a graph property $\Phi$ and graph $H$, the coefficient $\alpha_H$ in \eqref{eq:intro-sub-lincomb} is given by the so-called \emph{alternating enumerator} $\altEnum{\Phi}{H}$, sometimes defined without the $(-1)^{|E(H)|}$ factor below,
    \[\altEnum{\Phi}{H} \coloneqq (-1)^{|E(H)|} \sum_{S \subseteq E(H)} (-1)^{|S|}\Phi(H[S]).\]

Evaluating fixed finite linear combinations of $k$-vertex graphs as in \eqref{eq:intro-sub-lincomb} is asymptotically no harder than evaluating the individual $H$-subgraph counts with $\altEnum{\Phi}{H} \neq 0$.
For fixed graphs $H$, individual $H$-subgraph counts in turn can be evaluated in time $O(n^{\tau(H)+1})$, where $\tau(H)$ is the vertex-cover number of $H$ (see \cite[Theorem~1.1]{CurticapeanDM17}).
In particular, this implies that $\pindsub{\Phi}$ is fixed-parameter tractable when only graphs $H$ of small vertex-cover number satisfy $\altEnum{\Phi}{H} \neq 0$.
More quantitatively, writing $\tau_\Phi(k)$ for the maximal vertex cover number among $k$-vertex graphs $H$ with $\widehat{\Phi}(H) \neq 0$, we have:

\begin{theorem}[see, e.g., {\cite[Theorem 3.6]{CurticapeanN24}}]
    \label{thm:algorithm-alternating-enumerator-vc}
    For every computable graph property $\Phi$, the problem $\pindsub{\Phi}$ can be solved in time $f(k) \cdot n^{\tau_\Phi(k) + 1}$ for a computable function $f$.
\end{theorem}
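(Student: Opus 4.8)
The plan is to promote the basis-change identity \eqref{eq:intro-sub-lincomb} to an algorithm. On input $(G,k)$, I would first (i) compute the linear combination $\numindsubstar{\slice{\Phi}{k}} = \sum_{H \in \mathcal{H}} \alpha_H \cdot \numsubstar{H}$ explicitly, where $\mathcal{H}$ consists of the $k$-vertex graphs with $\alpha_H = \altEnum{\Phi}{H} \neq 0$ and this step depends only on $k$; then (ii) evaluate $\numsub{H}{G}$ for each $H \in \mathcal{H}$ using the known algorithm of \cite{CurticapeanDM17}; and finally (iii) output $\sum_{H \in \mathcal{H}} \alpha_H \cdot \numsub{H}{G}$, which equals $\numindsub{\slice{\Phi}{k}}{G}$ by \eqref{eq:intro-sub-lincomb}. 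The running-time bound then falls out because, by definition, $\tau_\Phi(k)$ is exactly the largest vertex-cover number $\tau(H)$ over the patterns $H$ surviving in step (i).

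For step (i): there are at most $2^{\binom{k}{2}}$ graphs on vertex set $[k]$, hence at most that many isomorphism types of $k$-vertex graphs. For each such $H$, the coefficient $\altEnum{\Phi}{H} = (-1)^{|E(H)|}\sum_{S \subseteq E(H)}(-1)^{|S|}\Phi(H[S])$ is a sum of at most $2^{\binom{k}{2}}$ terms, each of which is computable since $\Phi$ is computable. Thus the set $\mathcal{H}$ together with all its coefficients $\alpha_H$ can be computed in time $g(k)$ for a computable function $g$, with no dependence on $n$.

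For step (ii): by \cite[Theorem~1.1]{CurticapeanDM17}, $\numsub{H}{G}$ can be evaluated in time $h(k)\cdot n^{\tau(H)+1}$ for a computable $h$. Since $H \in \mathcal{H}$ means $\altEnum{\Phi}{H}\neq 0$, the definition of $\tau_\Phi(k)$ yields $\tau(H)\leq \tau_\Phi(k)$, so each such evaluation runs in time $h(k)\cdot n^{\tau_\Phi(k)+1}$. Summing over the at most $2^{\binom{k}{2}}$ patterns in $\mathcal{H}$ and adding the cost $g(k)$ of step (i), the total running time is at most $f(k)\cdot n^{\tau_\Phi(k)+1}$ with $f(k) := g(k) + 2^{\binom{k}{2}}\cdot h(k)$, which is computable. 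Correctness is immediate from \eqref{eq:intro-sub-lincomb}.

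I expect no serious obstacle: every ingredient is either assumed in the excerpt (the identity \eqref{eq:intro-sub-lincomb} with the alternating enumerator as its coefficient vector, and the $O(n^{\tau(H)+1})$ subgraph-counting algorithm) or elementary (computability of the coefficients from computability of $\Phi$), so the argument is a routine composition. If one were not permitted to take \eqref{eq:intro-sub-lincomb} for granted, the single nontrivial point would be its proof: that $\Phi(G[X])$, viewed as a function of the edge set of the induced subgraph on a fixed $k$-set $X$, expands by M\"obius inversion over the lattice of edge subsets into a combination of indicators ``$G[X]$ contains a fixed edge pattern'', and that summing these indicators over all $X \in \binom{V(G)}{k}$ produces exactly subgraph counts $\numsub{H}{G}$ with $|V(H)| = k$; the triangular, hence invertible, dependence on the number of edges makes this a genuine change of basis. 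Given the excerpt, however, this is not needed.
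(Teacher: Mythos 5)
Your argument is correct and is essentially the same as the one the paper itself sketches just before the theorem statement (and which underlies the cited result): compute the alternating-enumerator coefficients in time depending only on $k$, evaluate each surviving subgraph count via the $O(n^{\tau(H)+1})$ algorithm of \cite{CurticapeanDM17}, and sum. No gaps; your closing remark on the M\"obius-inversion origin of \eqref{eq:intro-sub-lincomb} is also the standard justification.
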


Conversely, under certain conditions, a $k$-vertex graph $H$ with $\altEnum{\Phi}{H} \neq 0$ that is hard on its own also implies hardness of the entire linear combination.
For example, under the guarantee that all patterns in the linear combination have the same number $k$ of vertices, this occurs when some graph $H$ with $\altEnum{\Phi}{H} \neq 0$ has large treewidth (see, e.g., \cite{DorflerRSW22}).
Large treewidth is however only a sufficient criterion, since counting $k$-matchings (and more generally, patterns of large vertex-cover number) is hard as well \cite{DBLP:conf/focs/CurticapeanM14}. 

Knowing about a similar phenomenon for the related case of homomorphism counts, one\footnote{Some did, as a superseded arXiv version of a related paper shows: \url{https://arxiv.org/abs/2407.07051v1}} might be led to believe
that linear combinations of $k$-vertex subgraph counts are hard if at least one pattern has large vertex-cover number.
This hypothesis however is too optimistic, as shown by $k$-partial determinants:
These are weighted counts of $k$-vertex cycle covers $C$ in a graph $G$, where $C$ is counted with weight $-1$ if it contains an odd number of cycles, and with weight $1$ otherwise.
The patterns in the subgraph count expansion of $k$-partial determinants are precisely the cycle covers on $k$ vertices. These do have large vertex-cover number, yet $k$-partial determinants can be computed in time $O(n^{\omega+1})$ due to the same cancellations that render usual determinants tractable.

In our updated version of Conjecture~\ref{conj:non-meager-hard}, we assert that such cancellations cannot occur for $\pindsub{\Phi}$. That is, we assert that graphs of large vertex-cover number in the subgraph expansion of $\pindsub{\Phi}$ indeed render the linear combination hard.
Recall that $\tau_\Phi(k)$ denotes the maximal vertex cover number among $k$-vertex graphs $H$ with $\widehat{\Phi}(H) \neq 0$.

\begin{conjecture}
    \label{conj:vc-hard}
    If $\Phi$ is a computable property and the function $\tau_\Phi$ is unbounded, then the problem $\pindsub{\Phi}$ is $\sharpwone$-hard.
\end{conjecture}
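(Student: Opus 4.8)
The plan is to route the argument through the homomorphism basis, where a usable complexity-monotonicity principle is available. For each slice $k$, expand $\numindsubstar{\slice{\Phi}{k}} = \sum_{F} \mu_\Phi(F)\cdot\numhomstar{F}$, the sum ranging over simple graphs $F$ on at most $k$ vertices, by composing the inclusion--exclusion formula \eqref{eq:intro-sub-lincomb} with the classical $\mathrm{Sub}\to\mathrm{Hom}$ expansion $\numsubstar{H} = \sum_{F\in\mathrm{spasm}(H)} \beta_{H,F}\cdot\numhomstar{F}$, where $\mathrm{spasm}(H)$ is the set of loop-free graphs obtained from $H$ by identifying pairwise non-adjacent vertices and $\beta_{H,F}\neq 0$ iff $F\in\mathrm{spasm}(H)$. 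Since $\Phi$ is computable, the coefficients $\mu_\Phi(F)$ are computable from $k$. By the complexity-monotonicity lemma for homomorphism counts and the $\sharpwone$-hardness of counting homomorphisms into patterns of unbounded treewidth \cite{CurticapeanDM17}, it then suffices to prove the purely combinatorial statement $(\star)$: if $\tau_\Phi$ is unbounded, then $\sup\{\tw(F) : \mu_\Phi(F)\neq 0\} = \infty$, where the supremum is over all $F$ occurring in any slice.

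To prove $(\star)$, use that unboundedness of $\tau_\Phi$ gives, for every constant $C$, a slice $k$ and a graph $H_k$ with $\altEnum{\Phi}{H_k}\neq 0$ and $\tau(H_k) = \tau_\Phi(k) > C$. Combining the vertex-cover dichotomy of \cite{DBLP:conf/focs/CurticapeanM14} with the $\mathrm{Sub}/\mathrm{Hom}$ correspondence of \cite{CurticapeanDM17} shows that bounded vertex-cover number over a class of graphs is equivalent to bounded treewidth over the union of their spasms; hence some $F^\ast\in\mathrm{spasm}(H_k)$ has $\tw(F^\ast)$ growing with $\tau(H_k)$ (for intuition, a $\binom{\ell}{2}$-matching has the clique $K_\ell$ in its spasm, so $\tw(F^\ast)=\Omega(\sqrt{\tau(H_k)})$ is already attainable). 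Note that $F^\ast$ is necessarily a \emph{proper} quotient, $|V(F^\ast)| < k$: for a graph $F$ with $|V(F)| = k$ one has $\mu_\Phi(F) = \altEnum{\Phi}{F}$, so there is nothing to prove for such $F$ — but large $\tau(H_k)$ does not force large $\tw(H_k)$, as witnessed by a perfect matching. So for the hard patterns $F^\ast$ we are forced to use, cancellation in $\mu_\Phi(F^\ast)$ is a genuine concern.

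The remaining step — showing $\mu_\Phi(F^\ast)\neq 0$ — is the heart of the conjecture. Here $\mu_\Phi(F^\ast) = \sum_{H} \beta_{H,F^\ast}\,\altEnum{\Phi}{H}$, the sum over $k$-vertex graphs $H$ with $F^\ast\in\mathrm{spasm}(H)$. One should first record a favourable fact: the Möbius coefficient $\beta_{H,F^\ast}$ has sign $(-1)^{|V(H)|-|V(F^\ast)|}$, which is the same for all $H$ in the sum since they all have $k$ vertices; thus $\mu_\Phi(F^\ast) = \pm\sum_H |\beta_{H,F^\ast}|\,\altEnum{\Phi}{H}$, and any cancellation must come from sign changes of the alternating enumerator $\altEnum{\Phi}{\cdot}$ itself, across non-isomorphic $k$-vertex graphs that collapse to $F^\ast$. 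This is precisely the phenomenon that the $k$-partial determinant shows cannot be excluded for arbitrary signed subgraph combinations, so the proof must exploit the combinatorial rigidity of the alternating enumerator of an isomorphism-invariant $\{0,1\}$-valued property. The approach I would try is an extremality argument: choose $H_k$ — and hence $F^\ast$ — extremally (e.g.\ among vertex-cover-maximal graphs with $\altEnum{\Phi}{\cdot}\neq 0$ in slice $k$, take one of maximum edge count, and let $F^\ast$ be a maximum-treewidth quotient of it with the fewest vertices), and then argue, via the combinatorics of the spasm relation and the defining formula of $\altEnum{\Phi}{\cdot}$, that the contribution of $H_k$ cannot be balanced by the remaining graphs. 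An alternative is to generalise the proofs of the known special cases — hereditary \cite{FockeR24} and edge-monotone \cite{CurticapeanN25,DoringMW24,DoringMW25} properties — which rule out cancellation by evaluating $\altEnum{\Phi}{\cdot}$ on tailored families of host graphs where a closure property of $\Phi$ leaves exactly one surviving term; the task would then be to find a substitute for such closure properties valid for every $\Phi$ with unbounded $\tau_\Phi$.

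A few routine matters remain. Unboundedness of $\tau_\Phi$ only guarantees hard slices along an infinite subsequence of values of $k$, but this is enough, since complexity monotonicity is applied slice-by-slice and the resulting $\sharpwone$-hardness reduction only needs infinitely many hard slices (padding an input graph with isolated vertices moves freely between slices); computability of $\Phi$ makes all coefficients and reductions effective; and one should sanity-check, against \Cref{thm:algorithm-alternating-enumerator-vc}, that the pattern $F^\ast$ produced satisfies $\tw(F^\ast) \le \tau(H_k)$, so that the lower bound does not overshoot the known upper bound. The main obstacle is squarely the non-cancellation claim $\mu_\Phi(F^\ast)\neq 0$ in full generality; it is almost certainly the reason this statement is presently only a conjecture.
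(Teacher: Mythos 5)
There is nothing in the paper to compare your attempt against: the statement you were asked about is \Cref{conj:vc-hard}, which the paper explicitly poses as an \emph{open conjecture} (it is the authors' proposed replacement for the refuted \Cref{conj:non-meager-hard}); no proof of it is given, and the paper even remarks that no counterexamples are currently known. Accordingly, your proposal cannot be judged as matching or deviating from a paper proof --- and, by your own admission, it is not a proof at all. What you do is reformulate the conjecture in the homomorphism basis: composing \eqref{eq:intro-sub-lincomb} with the subgraph-to-homomorphism expansion and invoking the dichotomy of Curticapean, Dell and Marx reduces the claim to showing that unbounded $\tau_\Phi$ forces some graph $F^\ast$ of large treewidth to carry a non-zero coefficient $\mu_\Phi(F^\ast)$. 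This reduction is correct but is essentially a known equivalent restatement of the problem (the paper's footnote in Section~\ref{sec:conj} alludes to exactly this implicit hom-basis criterion); the entire difficulty is concentrated in the non-vanishing claim $\mu_\Phi(F^\ast)\neq 0$, which you leave open.

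Concretely, the gap is this: your ``favourable fact'' that the partition-lattice coefficients $\beta_{H,F^\ast}$ all share the sign $(-1)^{|V(H)|-|V(F^\ast)|}$ within a slice does not prevent cancellation, because the alternating enumerator $\altEnum{\Phi}{H}$ itself takes both signs across the $k$-vertex graphs $H$ collapsing onto $F^\ast$ --- precisely the phenomenon exhibited by the $k$-partial determinant discussed in Section~\ref{sec:conj}, where such cancellations do occur for a signed combination of subgraph counts. So one must exploit that the coefficients arise from a $\{0,1\}$-valued isomorphism-invariant property, and your sketch offers only a direction (an extremal choice of $H_k$ and $F^\ast$, or a hypothetical substitute for the closure properties used in the hereditary and edge-monotone cases) with no mechanism for why the extremal term cannot be balanced by the others. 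Until that step is supplied, the proposal establishes nothing beyond the (known) equivalence between the conjecture and its hom-basis formulation; the heart of the statement remains exactly as open as the paper says it is.
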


Note that, by Theorem \ref{thm:algorithm-alternating-enumerator-vc}, the problem $\pindsub{\Phi}$ is fixed-parameter tractable if $\tau_\Phi$ is bounded.
Hence, Conjecture \ref{conj:vc-hard} formulates a complete characterization in terms of the subgraph basis expression of a property.

To the best of our knowledge, no counterexamples are known for Conjecture~\ref{conj:vc-hard}. Since $k$-partial determinants can be negative, they cannot be written as $\pindsub{\Phi}$ for a property $\Phi$, so they do not form a counterexample to our updated conjecture.
Moreover, we verify that none of the generalized scorpion properties $\Psi_\ell$ for fixed $\ell \in \NN$ refute it.
More specifically, we prove that, for every $k \geq \ell + 4$,
\begin{equation}
    \label{eq:scorpion-vc}
    \tau_{\Psi_\ell}(k) = \ell+2.
\end{equation}
Thus $\tau_{\Psi_\ell}(k) \in O(1)$ for fixed $\ell$, so $\Psi_\ell$ does not satisfy the premises of Conjecture~\ref{conj:vc-hard}.

\subsection{Scorpions in the Subgraph Basis}

We prove \eqref{eq:scorpion-vc} in this section.
Recall the definition of tail, sting, body, and legs of an $\ell$-scorpion
$S$. We call an $\ell$-scorpion $S$ a \emph{skeleton} if its legs form
an independent set. Moreover, an \emph{$\ell$-scorpion fossil} is any graph $S'$ obtained from an $\ell$-scorpion skeleton $S$ by adding an arbitrary number of edges $uv$ with $u,v \in V(S)$ such that at least one of $u,v$ is not a leg.
(Considering Figure~\ref{fig:scorpion}, a scorpion skeleton is obtained by removing all dashed edges. A scorpion fossil is obtained from a scorpion skeleton by adding arbitrary edges, but not between legs.)

To show \eqref{eq:scorpion-vc}, we prove the stronger statement that the graphs $H$ occurring with non-zero coefficients $\altEnum{\Psi_\ell}{H} \neq 0$ in the subgraph expansion of $\Psi_\ell$ are precisely the $\ell$-scorpion fossils.
This is indeed stronger: The vertex-cover number of $\ell$-scorpion fossils is at most $\ell+2$, since they retain the independent set on the legs. 
On the other hand, the augmented biclique $K_{\ell+2,k-\ell-2}^+$ obtained from a complete bipartite graph by turning the left side into a clique is an $\ell$-scorpion fossil of vertex-cover number $\ell+2$.
This graph will also be used in the lower bound under $\SETH$.

\begin{lemma}
    \label{lem:alt-enum-scorp}
    For every $\ell \in \NN$ and every $k$-vertex graph $H$ with $k \geq \ell +4$, we have $\altEnum{\scorp{\ell}}{H} \neq 0$ if and only if $H$ is an $\ell$-scorpion fossil.
\end{lemma}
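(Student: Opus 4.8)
The plan is to expand $\altEnum{\scorp{\ell}}{H}$ straight from its definition,
\[
\altEnum{\scorp{\ell}}{H}=(-1)^{|E(H)|}\sum_{S\subseteq E(H)}(-1)^{|S|}\,\bigl[\,(V(H),S)\text{ is an }\ell\text{-scorpion}\,\bigr],
\]
and to organize the signed sum using the uniqueness statement of Lemma~\ref{lem:scorpion:unique}. First I would note that every $S$ for which $(V(H),S)$ is an $\ell$-scorpion has a \emph{unique} associated role tuple $\mathbf q=(b,t_1,\dots,t_\ell,s)\in V(H)^{\ell+2}$ of pairwise distinct vertices. So I can partition the inner sum according to $\mathbf q$ and write $\altEnum{\scorp{\ell}}{H}=(-1)^{|E(H)|}\sum_{\mathbf q}\sigma(\mathbf q)$, where the outer sum ranges over tuples of pairwise distinct vertices (tuples with a repeated vertex contribute nothing, since the roles of a scorpion are distinct) and $\sigma(\mathbf q)$ is the signed number of edge subsets $S\subseteq E(H)$ such that $(V(H),S)$ is an $\ell$-scorpion with exactly the body, tail, and sting prescribed by $\mathbf q$.

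The core step is to evaluate $\sigma(\mathbf q)$ for a fixed such tuple. Writing $Q=\{b,t_1,\dots,t_\ell,s\}$ and $L=V(H)\setminus Q$ for the prescribed legs, I would unpack Definition~\ref{def:scorpion}: an $S\subseteq E(H)$ realizes an $\ell$-scorpion with roles $\mathbf q$ exactly when $S$ contains the $\ell+1$ path edges $bt_1,t_1t_2,\dots,t_\ell s$, contains all body--leg edges $bv$ with $v\in L$, contains \emph{no} other edge incident to $Q$, and contains an arbitrary subset of the leg--leg edges of $H$. Such an $S$ exists only if the path edges all lie in $E(H)$ and $L\subseteq N_H(b)$; when this holds, summing $(-1)^{|S|}$ over the free choice of a subset $T\subseteq E(H)\cap\binom{L}{2}$ gives $(-1)^{(\ell+1)+|L|}\sum_{T\subseteq E(H)\cap\binom{L}{2}}(-1)^{|T|}$, which is $0$ unless $H[L]$ is edgeless and equals $(-1)^{(\ell+1)+(k-\ell-2)}=(-1)^{k-1}$ otherwise. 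Hence $\sigma(\mathbf q)=(-1)^{k-1}$ if $\mathbf q$ witnesses that the path $b-t_1-\dots-t_\ell-s$ is present in $H$, that every vertex of $L$ is adjacent to $b$, and that $L$ is independent in $H$; and $\sigma(\mathbf q)=0$ otherwise. Since the nonzero value is a constant sign, no cancellation occurs across tuples, so $\altEnum{\scorp{\ell}}{H}=(-1)^{|E(H)|+k-1}\cdot N$, where $N$ counts the tuples $\mathbf q$ of pairwise distinct vertices meeting these three conditions.

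It then remains to identify the condition $N\ge 1$ with $H$ being an $\ell$-scorpion fossil. Because the prefactor is $\pm 1$, we have $\altEnum{\scorp{\ell}}{H}\neq 0$ iff $N\ge 1$, i.e.\ iff $H$ admits a tuple $(b,t_1,\dots,t_\ell,s)$ of pairwise distinct vertices with the path through them present, all other vertices adjacent to $b$, and all other vertices mutually nonadjacent. Given such a tuple, the subgraph on $V(H)$ consisting of the path edges together with the body--leg edges is an $\ell$-scorpion skeleton (note that $H[Q]$ itself need not be an induced path here, only its edges must be present), and every remaining edge of $H$ is incident to $Q$ since $L$ is independent, hence is a legal edge to add when passing from a skeleton to a fossil; so $H$ is a fossil. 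Conversely, an $\ell$-scorpion fossil carries such a tuple by construction, since adding only edges incident to $Q$ preserves presence of the path, adjacency of all legs to $b$, and independence of the legs. This yields the claimed equivalence.

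The step I expect to be the main obstacle is the evaluation of $\sigma(\mathbf q)$: one has to read Definition~\ref{def:scorpion} precisely---``$H[Q]$ is an induced path'' forbids chords inside $Q$, ``$b$ is the only vertex of $Q$ adjacent to legs'' forbids all $t_i$--leg and $s$--leg edges, while edges among the legs are unrestricted---in order to be certain that the realizing edge sets are exactly $\{\text{path edges}\}\cup\{\text{body--leg edges}\}\cup T$ with $T\subseteq\binom{L}{2}$ arbitrary, and then to observe that the alternating sum over $T$ collapses to independence of $L$ via the standard identity $\sum_{T\subseteq F}(-1)^{|T|}=[\,F=\emptyset\,]$. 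Once this is pinned down, the remaining bookkeeping---the partition by role tuple and the matching with the fossil definition---is routine.
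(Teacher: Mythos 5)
Your proof is correct and follows essentially the same route as the paper: partition the signed sum by the (unique, via Lemma~\ref{lem:scorpion:unique}) body--tail--sting tuple, observe that the free leg--leg edges force cancellation unless the legs are independent in $H$, and note that the surviving tuples all contribute the constant sign $(-1)^{k-1}$, so nonvanishing is equivalent to the existence of a tuple witnessing the fossil structure. Your version is slightly more explicit in that it evaluates each $\sigma(\mathbf q)$ in closed form, but the decomposition and key cancellation argument coincide with the paper's.
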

\begin{proof}
    Let $\CS \coloneqq \{S \subseteq E(H) \mid H[S] \in \scorp{\ell}\}$.
    Similar to Theorem \ref{thm:alg-scorpion}, we use the uniqueness of body, tail, and sting to partition $\CS$ into classes $\CS_\mathbf q$ for $q \in V(H)^{\ell+2}$.
    More specifically, for a tuple $\mathbf{q} = (b, t_1, \dots, t_\ell, s) \in V(H)^{\ell+2}$, let $\CS_\mathbf q$ denote the set of all $S \in \CS$ such that the scorpion $H[S]$ has body $b$, tail $t_1,\ldots,t_\ell$ and sting $s$.
    Note that we may have $\CS_\mathbf q = \emptyset$.
    
    By Lemma~\ref{lem:scorpion:unique}, the sets $\CS_\mathbf q$ for $\mathbf q \in V(H)^{\ell+2}$ partition $\CS$, which implies
    \begin{equation}
        \label{eq:alt-enum-scorp}
        \altEnum{\scorp{\ell}}{H} = (-1)^{|E(H)|} \cdot \sum_{S \in \CS} (-1)^{|S|} = (-1)^{|E(H)|} \cdot \sum_{\mathbf q \in V(H)^{\ell + 2}} \sum_{S \in \CS_\mathbf q} (-1)^{|S|}.
    \end{equation}
    In the following, for a tuple $\mathbf q \in V(H)^{\ell + 2}$, we write $\set(\mathbf q)$ for the set of entries of $\mathbf q$.
    \begin{claim}
        \label{claim:vc-from-alt-enum}
        If $\mathbf q \in V(H)^{\ell + 2}$ is such that $\set(\mathbf q)$ is not a vertex cover of $H$, then
        \[\sum_{S \in \CS_\mathbf q} (-1)^{|S|} = 0.\]
    \end{claim}
    \begin{claimproof}
        Let us first observe that the statement is trivial if $\CS_\mathbf q = \emptyset$.

        Let $E_\mathbf q \coloneqq \{e \in E(H) \mid e \cap \set(\mathbf q) = \emptyset\}$ be the edges of $H$ without an endpoint in $\set(\mathbf q)$.
        Since $\set(\mathbf q)$ is not a vertex cover, we get $E_\mathbf q \neq \emptyset$.
        Observe that $F_\mathbf q \coloneqq S \setminus E_\mathbf q = S' \setminus E_\mathbf q$ for all $S,S' \in \CS_\mathbf q$, because edges incident to body, tail or sting (i.e., edges incident to the vertices in $\mathbf q$) are fixed.
        Since edges between leg vertices can be chosen arbitrarily, we conclude that
        \[\CS_\mathbf q = \{F_\mathbf q \cup E_\mathbf q' \mid E_\mathbf q' \subseteq E_\mathbf q\}.\]
        In particular,
        \[\sum_{S \in \CS_\mathbf q} (-1)^{|S|} = \sum_{E_\mathbf q' \subseteq E_\mathbf q} (-1)^{|F_\mathbf q| + |E_\mathbf q'|} = (-1)^{|F_\mathbf q|}\sum_{E_\mathbf q' \subseteq E_\mathbf q} (-1)^{|E_\mathbf q'|} = 0\]
        since $E_\mathbf q \neq \emptyset$.
    \end{claimproof}

    Now, for the forward direction of the lemma, suppose that $\altEnum{\scorp{\ell}}{H} \neq 0$.
    Then there is some $\mathbf q = (b, t_1, \dots, t_\ell, s)$ such that $\sum_{S \in \CS_\mathbf q} (-1)^{|S|} \neq 0$.
    This implies that
    \begin{enumerate}[label = (\alph*)]
        \item $\set(\mathbf q)$ is a vertex cover of $H$ by Claim \ref{claim:vc-from-alt-enum}, so the vertices in $V(H) \setminus \set(\mathbf q)$ form an independent set in $H$, and
        \item $\CS_\mathbf q \neq \emptyset$, i.e., there is some $S \subseteq E(H)$ such that $H[S]$ is an $\ell$-scorpion with body $b$, tail $t_1,\ldots,t_\ell$ and sting $s$. This means that the vertices in $V(H) \setminus \set(\mathbf q)$ form the legs of $H[S]$.
    \end{enumerate}
    These two properties together directly imply that $H$ is an $\ell$-scorpion fossil.

    For the backward direction, suppose that $H$ is an $\ell$-scorpion fossil, and let $\mathbf{q} = (b, t_1, \dots, t_\ell, s)$ denote the tuple of body, tail, and sting of an underlying $\ell$-scorpion skeleton.
    Then
    \begin{enumerate}[label = (\roman*)]
        \item\label{item:alt-enum-scorp-1} $\CS_\mathbf q \neq \emptyset$, 
        \item\label{item:alt-enum-scorp-2} $\set(\mathbf{q})$ is a vertex cover of $H$.
    \end{enumerate}
    The first item holds because the underlying $\ell$-scorpion skeleton is a witness to $\CS_\mathbf q \neq \emptyset$, while the second item holds from the definition of $\ell$-scorpion skeletons.
    
    Now, let $A \subseteq V(H)^{\ell + 2}$ denote the set of all tuples $\mathbf q'$ satisfying \ref{item:alt-enum-scorp-1} and \ref{item:alt-enum-scorp-2}.
    Then, for every $\mathbf q' \in A$ and every $S \in \mathcal{S}_{\mathbf q'}$, the legs of $H[S]$ form an independent set, since the vertices in $\mathbf q' \in A$ form a vertex-cover. This implies that $H[S]$ is an $\ell$-scorpion skeleton, so $S$ contains exactly $k - 1$ edges, and we obtain
    \begin{equation}\label{eq:goodq}
        \sum_{S \in \CS_{\mathbf q'}} (-1)^{|S|}=(-1)^{k-1} |\CS_{\mathbf q'}| \quad \text{ for all }\mathbf q' \in A.
    \end{equation}
    On the other hand, for $\mathbf q' \in V(H)^{\ell + 2} \setminus A$, the tuple $\mathbf q'$ violates \ref{item:alt-enum-scorp-1}, or $\mathbf q'$ violates \ref{item:alt-enum-scorp-2}. We obtain directly (in the first case) or via Claim~\ref{claim:vc-from-alt-enum} (in the second case) that
    \begin{equation}\label{eq:badq}
        \sum_{S \in \CS_{\mathbf q'}} (-1)^{|S|}=0\quad \text{ for all }\mathbf q' \in V(H)^{\ell + 2}\setminus A.
    \end{equation}
    Let us abbreviate $X_{\mathbf{q}'} \coloneqq \sum_{S \in \CS_{\mathbf q'}} (-1)^{|S|}$.
    Then it follows that
    \begin{align*}
        \altEnum{\scorp{\ell}}{H}
            &= (-1)^{|E(H)|} \sum_{\mathbf q' \in V(H)^{\ell + 2}} X_{\mathbf{q}'}\\
            &= (-1)^{|E(H)|}  \left(\sum_{\substack{\mathbf q' \in V(H)^{\ell + 2} \setminus  A}} X_{\mathbf{q}'} + \sum_{\substack{\mathbf q' \in A}} X_{\mathbf{q}'}\right)\\
            &= (-1)^{|E(H)|}  \left(0 + \sum_{\substack{\mathbf q' \in A}} (-1)^{k-1} |\CS_{\mathbf q'}|\right) = (-1)^{|E(H)| + k - 1} \sum_{\substack{\mathbf q' \in A}} |\CS_{\mathbf q'}| \neq 0,
    \end{align*}
    where we used \eqref{eq:goodq} and \eqref{eq:badq} in the third equality, and where the last inequality holds since $A \neq \emptyset$ and $\CS_{\mathbf q'} \neq \emptyset$ for all $\mathbf q' \in A$.
\end{proof}

As discussed above, the lemma implies Equation~\eqref{eq:scorpion-vc} on the maximum vertex-cover number in the subgraph expansion of the scorpion property.
Together with Theorem \ref{thm:algorithm-alternating-enumerator-vc}, the upper bound on the vertex-cover number gives an alternative proof that $\pindsub{\Psi_\ell}$ is fixed-parameter tractable for $\ell \in \NN$, with the same polynomial degree as in Theorem \ref{thm:alg-scorpion}.
We stress that similar arguments also work for the adaptations of the scorpion property discussed in the beginning of Section \ref{sec:conj}.
For example, for the property $\Lambda$ defined there, we get $\tau_\Lambda(k) \leq 7$ for every $k \in \NN$.

Also, by combining Lemma \ref{lem:alt-enum-scorp} with a result from \cite{BringmannS21}, we obtain the following lower bound under the Strong Exponential-Time Hypothesis $\SETH$.

\begin{proposition}
    \label{prop:seth-scorpions}
    Assuming $\SETH$, for every $\ell \geq 3$ and $\varepsilon > 0$, there is no algorithm solving $\pindsub{\Psi_\ell}$ in time $O(n^{\ell + 2 - \varepsilon})$.
\end{proposition}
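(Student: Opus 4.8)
The plan is to combine the subgraph-basis expansion with Lemma~\ref{lem:alt-enum-scorp} to reduce a subgraph-counting problem that is hard under $\SETH$ to $\pindsub{\Psi_\ell}$. Recall that, for every input $(G,k)$,
\[\numindsub{\slice{\scorp{\ell}}{k}}{G} = \sum_{H} \altEnum{\scorp{\ell}}{H} \cdot \numsub{H}{G},\]
where, by Lemma~\ref{lem:alt-enum-scorp}, the sum ranges precisely over the $\ell$-scorpion fossils $H$ on $k$ vertices, each with a non-zero coefficient. Among these fossils, the augmented biclique $K_{\ell+2,k-\ell-2}^+$ occurs, and (as observed before Lemma~\ref{lem:alt-enum-scorp}) no fossil has vertex-cover number exceeding $\vc(K_{\ell+2,k-\ell-2}^+)=\ell+2$. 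The first step is then the standard complexity-monotonicity argument for the subgraph basis (tensoring the host graph with suitable small graphs and interpolating, as in \cite{CurticapeanDM17} and as used for $\pindsub{\Phi}$ in \cite{DorflerRSW22,RothSW24,CurticapeanN25}): from an algorithm computing $\numindsub{\slice{\scorp{\ell}}{k}}{G}$ in time $T(|V(G)|)$ one extracts an algorithm computing $\numsubstar{K_{\ell+2,k-\ell-2}^+}$ on $N$-vertex graphs in time $g(k)\cdot T(O(k\cdot N))$ for some function $g$ depending only on $k$ and $\ell$.

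The second step is to invoke the $\SETH$ lower bound of \cite{BringmannS21} for counting copies of the augmented biclique $K_{\ell+2,m}^+$: for $\ell\geq 3$ and every $\varepsilon>0$, assuming $\SETH$, there is no $O(N^{\ell+2-\varepsilon})$-time algorithm for this problem. If this bound already holds for a fixed $m$, then $k=m+\ell+2$ is a constant, the overheads $g(k)$ and $O(k)$ are constants, and an $O(n^{\ell+2-\varepsilon})$-time algorithm for $\pindsub{\Psi_\ell}$ would count $K_{\ell+2,m}^+$-subgraphs in time $O(N^{\ell+2-\varepsilon})$, a direct contradiction; if instead \cite{BringmannS21} needs $m$ to grow mildly with $N$, one lets $k=m+\ell+2$ grow slowly enough that $g(k)$ and $O(k)$ are both $N^{o(1)}$, yielding an $N^{\ell+2-\varepsilon+o(1)}$-time algorithm, which still contradicts \cite{BringmannS21}. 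Should that result be phrased for subgraph \emph{detection} or for the colour-prescribed variant, one prepends the routine reductions: detection reduces to counting since a copy exists iff the count is non-zero, and colour-prescribed counting reduces to ordinary counting by inclusion--exclusion over the colour classes.

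I expect the main obstacle to lie in the quantitative matching between the two steps --- making sure that the interpolation points stay of size $N^{1+o(1)}$ and that the coefficient extracted by complexity monotonicity is genuinely that of a hardest pattern in the expansion. This is exactly where Lemma~\ref{lem:alt-enum-scorp} is indispensable: it certifies that \emph{every} pattern in the linear combination has vertex-cover number at most $\ell+2$, so no term harder than $K_{\ell+2,k-\ell-2}^+$ can hide there, and it is also the source of the hypothesis $\ell\geq 3$, namely that the clique side $K_{\ell+2}\subseteq K_{\ell+2,m}^+$ must be large enough for the \cite{BringmannS21} bound to apply. The remaining routine check is that the concrete regime of $m$ (constant versus slowly growing) in the statement of \cite{BringmannS21} is compatible with the growth of $k$ forced by the complexity-monotonicity overhead.
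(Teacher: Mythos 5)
The overall skeleton of your reduction (use Lemma~\ref{lem:alt-enum-scorp} to certify that the augmented biclique $K_{\ell+2,k-\ell-2}^+$ carries a non-zero coefficient, then import a $\SETH$ lower bound for that pattern from \cite{BringmannS21}) matches the paper, but your first step contains a genuine gap. You invoke a ``standard complexity-monotonicity argument for the subgraph basis'' to extract the \emph{uncolored} count $\numsubstar{K_{\ell+2,k-\ell-2}^+}$ from the linear combination $\numindsubstar{\slice{\scorp{\ell}}{k}} = \sum_H \altEnum{\scorp{\ell}}{H}\cdot\numsubstar{H}$. No such extraction principle is available: term extraction from linear combinations of same-size subgraph counts is false in general, as witnessed by the $k$-partial determinants discussed in Section~\ref{sec:conj} (all their patterns are $k$-vertex cycle covers, individually $\sharpwone$-hard to count, yet the combination is computable in time $O(n^{\omega+1})$), and the assertion that a hard term with large vertex-cover number forces hardness of the combination is precisely the paper's \emph{open} Conjecture~\ref{conj:vc-hard}, so it cannot be assumed in a proof. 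The tensoring-and-interpolation machinery of \cite{CurticapeanDM17} gives monotonicity in the \emph{homomorphism} basis, and the works \cite{DorflerRSW22,RothSW24,CurticapeanN25} you cite transfer hardness through colored/color-prescribed versions of the pattern, not by recovering uncolored subgraph counts. Your closing remark that the uniform bound $\tau(H)\le\ell+2$ on all fossils guarantees that ``no harder term can hide there'' is not what is needed either; for a lower bound one must isolate the hard pattern, and boundedness of the other terms does not accomplish that.

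The paper's actual route repairs exactly this step by staying in the colored setting: \cite[Lemma~A.3]{CurticapeanN24} shows that the \emph{color-prescribed subgraph decision} problem for a pattern $H$ with $\altEnum{\Phi}{H}\neq 0$ can be solved with $f(k)$ oracle calls to $\numindsubstar{\Phi}$ on graphs of $O(n)$ vertices; this is the non-trivial lemma that replaces your ``extraction,'' and it needs only the non-vanishing coefficient, which Lemma~\ref{lem:alt-enum-scorp} supplies for $K_{\ell+2,k-\ell-2}^+$. On the source side there is a second (smaller) gap in your write-up: \cite{BringmannS21} does not state a bound for counting $K_{\ell+2,m}^+$-copies, but a $\SETH$ lower bound of $n^{t-\varepsilon}$ for \emph{colorful detection} of a specific bipartite pattern $T(h,0,t)$; one then pads the host graph to pass from $T(h,0,t)$ to its supergraph $K_{t,h}^+$ on the same vertex set, and sets $t=\ell+2$. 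Your hedge ``colour-prescribed counting reduces to ordinary counting by inclusion--exclusion over the colour classes'' does not close this, since inclusion--exclusion over color classes yields colorful (all-colors-used) copies rather than color-prescribed ones, and in any case your chain would still rest on the unavailable uncolored count from step one. If you want an argument genuinely different from the paper's, you could extract $\numhomstar{K_{\ell+2,k-\ell-2}^+}$ via homomorphism-basis monotonicity (the $k$-vertex patterns of the subgraph expansion survive with non-zero coefficients there) and then connect to the colored lower bound of \cite{BringmannS21}, but that still requires the colored-to-counting bridge rather than the subgraph-basis shortcut you propose.
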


\begin{proof}[Proof sketch]
    In the full arXiv version of \cite{BringmannS21}, Bringmann and Slusallek~\cite[Definition~19]{BringmannS21b} define a particular class of bipartite graphs $T(h,0,t)$ on $t+h$ vertices.
    They prove that, for every $t \geq 3$, there is some $h \in \mathbb N$ such that the colorful subgraph decision problem with pattern $T(h,0,t)$ cannot be solved in $O(n^{t -\varepsilon})$ time under $\SETH$; see \cite[Theorem~24.1]{BringmannS21b} and its proof.
    To be specific, for a fixed pattern $H$, the input to this decision problem is a graph $G$ whose vertices are colored with $V(H)$, and the task is to decide whether there is a map $h\colon V(H) \to V(G)$ such that $h(v)$ is $v$-colored for all $v\in V(H)$ and $uv\in E(H)$ implies $h(u)h(v)\in E(G)$.

    This problem with pattern $H$ can be reduced to the problem with pattern $H'\supseteq H$ by adding padding edges to $G$; this increases the size of $G$ by at most a constant factor.
    Since $T(h,0,t)$ is a bipartite graph on $t+h$ vertices, it follows that the colorful subgraph decision problem with the graph $K_{t,h}^+$ as pattern cannot be solved in $O(n^{t -\varepsilon})$ time under $\SETH$.

    By~\cite[Lemma~A.3]{CurticapeanN24}, the colorful subgraph decision problem with pattern $K_{t,h}^+$ and $n$-vertex input graph $G$ can be solved in $O(f(k) \cdot n^2)$ time with an oracle for $\numindsub{\Phi}{\star}$ when $\Phi$ is a property of $k$-vertex graphs with $\altEnum{\Phi}{K_{t,h}^+} \neq 0$.
    Each of the at most $f(k)$ oracle calls queries an uncolored graph on $O(n)$ vertices, so an $O(n^c)$ time algorithm for $\numindsub{\Phi}{\star}$ translates into an $O(n^c)$ time algorithm for the subgraph decision problem for $H'$.

    Combining the above: Colorful subgraph decision with pattern $T(h,0,t)$ cannot be solved in $O(n^{t -\varepsilon})$ time under $\SETH$; the same holds for pattern $K_{t,h}^+$ by padding, and using \cite[Lemma~A.3]{CurticapeanN24} together with \Cref{lem:alt-enum-scorp} gives the same lower bound for $\pindsub{\scorp{\ell}}$ with $t=\ell+2$.
\end{proof}

\begin{remark}
    We stress that, in contrast to \Cref{cor:eth-scorpions,cor:eth-scorpions-function}, the $\SETH$ lower bound from Proposition \ref{prop:seth-scorpions} only holds for fixed $\ell$, and not if $\ell$ is chosen as a function of $k$.
\end{remark}

\subsection{Implications of the Conjecture}

To conclude the paper and as a potential avenue for future work, we show that our refined Conjecture~\ref{conj:vc-hard} would easily imply generalizations and variants of known hardness results.

Most of the recent works (see, e.g., \cite{CurticapeanN25,DorflerRSW22,DoringMW24,DoringMW25,RothSW24}) prove hardness of $\pindsub{\Phi}$ by identifying graphs with non-zero alternating enumerator and unbounded \emph{treewidth} rather than vertex-cover number; this is a stronger property.
More concretely, let $\eta_\Phi(k)$ denote the maximal treewidth among $k$-vertex graphs $H$ with $\widehat{\Phi}(H) \neq 0$.
If $\Phi$ is computable and $\eta_\Phi$ is unbounded, then $\pindsub{\Phi}$ is $\sharpwone$-hard (see, e.g., \cite[Theorem 3.1]{CurticapeanN25}).
However, showing that $\eta_\Phi$ is unbounded is often quite a challenge.
Conjecture~\ref{conj:vc-hard} postulates that it suffices for $\tau_\Phi$ to be unbounded to obtain $\sharpwone$-hardness, which is often much easier to show.

As a concrete example, previous work \cite{DorflerRSW22} established hardness of $\pindsub{\Phi}$ in the presence of infinitely many primes $p$ with $\Phi(\IS_{2p}) \neq \Phi(K_{p,p})$, where $\IS_{2p}$ and $K_{p,p}$ are the edge-less and complete bipartite graphs, respectively.
Similarly, for edge-monotone properties $\Phi$, which are studied in \cite{CurticapeanN25,DoringMW24,DoringMW25}, we typically obtain $\Phi(\IS_k) \neq \Phi(K_k)$ for all sufficiently large $k$.
As we show below, Conjecture~\ref{conj:vc-hard} implies that infinitely many \emph{arbitrary} prime-order vertex-transitive graphs $H_1$, $H_2$ with $\Phi(H_1) \neq \Phi(H_2)$ suffice for hardness. This can be seen as a generalization of \cite{DorflerRSW22}, since $K_{p, p}$ and $\IS_{2p}$ both contain transitive $p$-groups in their automorphism groups. 

\begin{proposition}
    Let $\Phi$ be a graph property such that, for infinity many primes $p$, there are vertex-transitive graphs $H_1$, $H_2$ with $p$ vertices and $\Phi(H_1) \neq \Phi(H_2)$.
    Then $\pindsub{\Phi}$ is $\sharpwone$-hard assuming Conjecture \ref{conj:vc-hard}.
\end{proposition}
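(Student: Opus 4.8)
By Conjecture~\ref{conj:vc-hard}, it suffices to show that the hypothesis forces the function $\tau_\Phi$ to be unbounded, after which the conjecture yields $\sharpwone$-hardness of $\pindsub{\Phi}$. Concretely, the plan is to prove that for each of the infinitely many primes $p$ in the hypothesis---which, after discarding $p=2$, we may take to be odd---the subgraph expansion of $\Phi$ contains a $p$-vertex graph $H_p$ with $\altEnum{\Phi}{H_p}\neq 0$ and $\tau(H_p)\geq p/2$. Having this along infinitely many $p$ gives $\tau_\Phi(p)\geq p/2$ for those $p$, hence the required unboundedness.

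First I would normalise the two witnesses $H_1,H_2$. A vertex-transitive graph on a prime number $p$ of vertices has automorphism group of order divisible by $p$, so by Cauchy's theorem this group contains an element of order $p$; being an element of $\Sym([p])$ of prime order, it is necessarily a single $p$-cycle, and hence $H_1$ and $H_2$ are circulant graphs (Cayley graphs of $\ZZ_p$). Since any two $p$-cycles are conjugate in $\Sym([p])$, after applying a suitable relabelling to $H_2$ (which leaves $\Phi(H_2)$ unchanged) I may assume that $H_1$ and $H_2$ are both invariant under one fixed $p$-cycle $\sigma=(1\,2\,\cdots\,p)$ on the vertex set $\ZZ_p$. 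Writing $F_D$ for the circulant graph on $\ZZ_p$ with connection set $\{\pm d : d\in D\}$, for $D\subseteq[m]$ with $m\coloneqq(p-1)/2$, the $\sigma$-invariant graphs on $\ZZ_p$ are exactly the graphs $F_D$; thus $H_1=F_{D_1}$, $H_2=F_{D_2}$, and the hypothesis $\Phi(H_1)\neq\Phi(H_2)$ says precisely that the Boolean function $g\colon 2^{[m]}\to\{0,1\}$, $g(D)\coloneqq\Phi(F_D)$, is non-constant.

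The core step is to compute $\altEnum{\Phi}{F_D}$ modulo $p$ using the action of $\langle\sigma\rangle\cong\ZZ_p$ on $E(F_D)$. In the defining sum $\sum_{S\subseteq E(F_D)}(-1)^{|S|}\Phi(F_D[S])$, group the edge sets $S$ by $\langle\sigma\rangle$-orbit: along an orbit $|S|$ is constant and the spanning subgraphs $F_D[S]$ are pairwise isomorphic (via powers of $\sigma$), so every orbit of length $p$ contributes a multiple of $p$, and modulo $p$ only the $\sigma$-invariant $S$ survive. Since $\sigma$ permutes each of the $|D|$ difference classes making up $E(F_D)$ as a single $p$-cycle, the $\sigma$-invariant subsets of $E(F_D)$ are exactly the edge sets of the graphs $F_{D'}$ with $D'\subseteq D$. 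Tracking the signs (using that $p$ is odd, so $(-1)^{p\cdot|D'|}=(-1)^{|D'|}$ and $(-1)^{|E(F_D)|}=(-1)^{|D|}$), this yields
\[
\altEnum{\Phi}{F_D}\ \equiv\ \sum_{D'\subseteq D}(-1)^{|D\setminus D'|}\,g(D')\ =\ (Mg)(D)\pmod{p},
\]
where $Mg$ denotes the Möbius transform of $g$ over the Boolean lattice $2^{[m]}$. Möbius inversion gives $g(D)=\sum_{D'\subseteq D}(Mg)(D')$; if $(Mg)(D')\equiv 0\pmod{p}$ held for all non-empty $D'$, then $g(D)\equiv g(\emptyset)\pmod{p}$ for every $D$, and since $g$ is $\{0,1\}$-valued and $p\geq 3$ this would force $g$ constant, contradicting the previous paragraph. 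Hence there is a non-empty $D^\ast\subseteq[m]$ with $(Mg)(D^\ast)\not\equiv 0\pmod{p}$, so $\altEnum{\Phi}{F_{D^\ast}}\neq 0$. Finally, $D^\ast\neq\emptyset$ makes $F_{D^\ast}$ a $2|D^\ast|$-regular graph on $p$ vertices; as any independent set in a regular graph of positive degree spans at most half the vertices, $\tau(F_{D^\ast})\geq p/2$. Taking $H_p\coloneqq F_{D^\ast}$ completes the reduction.

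I expect the only genuine obstacle to be this core step, and in particular the point that one must not simply take $H_p\coloneqq K_p$. The mod-$p$ reduction shows $\altEnum{\Phi}{K_p}\equiv(Mg)([m])\pmod{p}$, and this single value can vanish even when $g$ is non-constant (for instance if $g$ equals $1$ precisely on the subsets of a prescribed parity). One therefore has to let $D$ range over all of $2^{[m]}$ and use Möbius inversion to locate a connection set $D^\ast$ that is non-empty---so that $F_{D^\ast}$ is regular of positive degree and thus has no large independent set---at which the alternating enumerator does not vanish modulo $p$. The remaining ingredients (Cauchy's theorem, the orbit-counting argument, Möbius inversion on the Boolean lattice, and the half-the-vertices bound on independence numbers of regular graphs) are routine.
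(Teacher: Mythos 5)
Your proposal is correct, but it takes a genuinely different route from the paper. The paper argues abstractly: from $\Phi(H_1)\neq\Phi(H_2)$ it extracts a vertex-transitive $p$-vertex graph $H_p$ with $\Phi(H_p)\neq\Phi(\IS_p)$, takes a transitive $p$-subgroup $\Gamma_p\leq\Aut(H_p)$, and then picks a \emph{minimal} fixed point $F_p$ of $\Gamma_p$ (a spanning subgraph with $\Gamma_p$-invariant edge set) satisfying $\Phi(F_p)\neq\Phi(\IS_p)$; cited lemmas from D\"oring--M\"uller--Wellnitz then give $\altEnum{\Phi}{F_p}\neq 0$ and $\Gamma_p\subseteq\Aut(F_p)$, so $F_p$ is vertex-transitive, hence regular, hence (by a cited independence-number bound) has vertex-cover number at least $p/2-1$, and Conjecture~\ref{conj:vc-hard} applies. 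You instead exploit primality directly: by Cauchy's theorem both witnesses are circulants, so the $\sigma$-invariant spanning subgraphs are exactly the circulants $F_{D'}$, $D'\subseteq D$, and your orbit-counting computation of $\altEnum{\Phi}{F_D}$ modulo $p$ together with M\"obius inversion over the Boolean lattice of connection sets is, in effect, a self-contained reproof of the fixed-point lemma in the cyclic case (your nonempty $D^\ast$ with nonvanishing M\"obius transform plays the role of the paper's minimal fixed point, and your edge-counting bound $\alpha\leq p/2$ replaces the cited regular-graph bound). What each approach buys: yours is elementary and fully explicit, avoiding the external lemmas entirely, while the paper's fixed-point formalism is shorter given those lemmas and is phrased in machinery that extends beyond cyclic groups of prime order; both conclude identically by noting $\tau_\Phi(p)\geq p/2-O(1)$ along infinitely many primes and invoking Conjecture~\ref{conj:vc-hard}. (Neither your argument nor the paper's addresses the computability hypothesis in Conjecture~\ref{conj:vc-hard}, so this is not a defect specific to your write-up.)
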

\begin{proof}
    For a prime $p$, we say that a group $\Gamma$ is a $p$-group if its order is a power of $p$, and we establish some facts about groups and graphs:
    \begin{enumerate}[label = (\Roman*)]
        \item\label{item:p-subgroup} Each transitive group that operates on a set of $p$ elements contains a transitive $p$-subgroup. This holds since $\Gamma$ contains a cyclic group of order $p$, which is transitive (see the discussion before \cite[Theorem 1]{Neumann74} for more details).
        \item\label{item:transitive-regular} Each vertex-transitive graph is regular. This is trivial.
        \item\label{item:regular-vc} Every $d$-regular $k$-vertex graph $H$ with $d>1$ has vertex-cover number $\tau(H) \geq k/2 - 1$:
        By \cite{Rosenfeld64}, the independence number $\alpha(H)$ of $H$ is at most $k/2 + 1$, so $\tau(H) = k - \alpha(H) \geq k/2 - 1$.
    \end{enumerate}

    By the requirements of the proposition, there is a vertex-transitive graph $H_p$ with $p$ vertices and $\Phi(H_p) \neq \Phi(\IS_p)$ for infinitely many primes $p$.
    Let $H_p$ be such a graph.
    Then $\Aut(H_p)$ contains a transitive $p$-subgroup $\Gamma_p$ by \ref{item:p-subgroup}.
    We say a subgraph $F_p$ of $H_p$ is a \emph{fixed point of $\Gamma_p$ in $H_p$} if $V(F_p) = V(H_p)$ and $\gamma(E(F_p)) = E(F_p)$ for all $\gamma \in \Gamma_p$, where $\gamma(E(F_p)) \coloneqq \{\gamma(u)\gamma(v) \mid uv \in E(F_p)\}$.
    Let $\fp{\Gamma_p}{H_p}$ denote the set of fixed points of $\Gamma_p$ in $H_p$ (see \cite[Appendix A]{DoringMW25} for more details on fixed points of group actions in graphs). 

    We show that there is a fixed point $F_p \in \fp{\Gamma_p}{H_p}$ that has a non-zero alternating enumerator and $\tau(F_p) \geq p/2 - 1$.
    For this, let $F_p$ be a fixed point in $\fp{\Gamma_p}{H_p}$ with $\Phi(F_p) \neq \Phi(\IS_p)$ and $\Phi(F'_p) = \Phi(\IS_p)$ for all $F'_p \in \fp{\Gamma_p}{H_p}$ with $F'_p \subsetneq F_p$ (i.e., $F'_p$ is a proper subgraph of $F_p$).
    Such a graph $F_p$ exists since $H_p \in \fp{\Gamma_p}{H_p}$ and $\Phi(H_p) \neq \Phi(\IS_p)$.
    Now, \cite[Lemma~4.8]{DoringMW24} implies that $\altEnum{\Phi}{F_p} \neq 0$.
    Moreover, \cite[Lemma~3.1]{DoringMW24} implies that $\Gamma_p \subseteq \Aut(F_p)$, and hence $F_p$ is vertex-transitive.
    Lastly, \ref{item:transitive-regular} and \ref{item:regular-vc} yield that $\tau(F_p) \geq p/2 - 1$.

    This means that we can find a sequence of graphs $F_p$ with unbounded vertex cover number and $\widehat{\Phi}(F_p) \neq 0$, and thus $\pindsub{\Phi}$ is $\sharpwone$-hard by \Cref{conj:vc-hard}.
\end{proof}

\bibliographystyle{plainurl}
\bibliography{refs}

\begin{thebibliography}{10}

\bibitem{DBLP:conf/soda/AlmanDWXXZ25}
Josh Alman, Ran Duan, Virginia {Vassilevska Williams}, Yinzhan Xu, Zixuan Xu, and Renfei Zhou.
\newblock More asymmetry yields faster matrix multiplication.
\newblock In Yossi Azar and Debmalya Panigrahi, editors, {\em Proceedings of the 2025 Annual {ACM-SIAM} Symposium on Discrete Algorithms, {SODA} 2025, New Orleans, LA, USA, January 12-15, 2025}, pages 2005--2039. {SIAM}, 2025.
\newblock \href {https://doi.org/10.1137/1.9781611978322.63} {\path{doi:10.1137/1.9781611978322.63}}.

\bibitem{BringmannS21}
Karl Bringmann and Jasper Slusallek.
\newblock Current algorithms for detecting subgraphs of bounded treewidth are probably optimal.
\newblock In Nikhil Bansal, Emanuela Merelli, and James Worrell, editors, {\em 48th International Colloquium on Automata, Languages, and Programming, {ICALP} 2021, July 12-16, 2021, Glasgow, Scotland (Virtual Conference)}, volume 198 of {\em LIPIcs}, pages 40:1--40:16. Schloss Dagstuhl - Leibniz-Zentrum f{\"{u}}r Informatik, 2021.
\newblock \href {https://doi.org/10.4230/LIPICS.ICALP.2021.40} {\path{doi:10.4230/LIPICS.ICALP.2021.40}}.

\bibitem{BringmannS21b}
Karl Bringmann and Jasper Slusallek.
\newblock Current algorithms for detecting subgraphs of bounded treewidth are probably optimal.
\newblock {\em CoRR}, abs/2105.05062, 2021.
\newblock \href {https://arxiv.org/abs/2105.05062} {\path{arXiv:2105.05062}}.

\bibitem{ChenHKX06}
Jianer Chen, Xiuzhen Huang, Iyad~A. Kanj, and Ge~Xia.
\newblock Strong computational lower bounds via parameterized complexity.
\newblock {\em J. Comput. Syst. Sci.}, 72(8):1346--1367, 2006.
\newblock \href {https://doi.org/10.1016/J.JCSS.2006.04.007} {\path{doi:10.1016/J.JCSS.2006.04.007}}.

\bibitem{CurticapeanDM17}
Radu Curticapean, Holger Dell, and D{\'{a}}niel Marx.
\newblock Homomorphisms are a good basis for counting small subgraphs.
\newblock In Hamed Hatami, Pierre McKenzie, and Valerie King, editors, {\em Proceedings of the 49th Annual {ACM} {SIGACT} Symposium on Theory of Computing, {STOC} 2017, Montreal, QC, Canada, June 19-23, 2017}, pages 210--223. {ACM}, 2017.
\newblock \href {https://doi.org/10.1145/3055399.3055502} {\path{doi:10.1145/3055399.3055502}}.

\bibitem{CurticapeanDN025}
Radu Curticapean, Simon D{\"{o}}ring, Daniel Neuen, and Jiaheng Wang.
\newblock Can you link up with treewidth?
\newblock In Olaf Beyersdorff, Michal Pilipczuk, Elaine Pimentel, and Kim~Thang Nguyen, editors, {\em 42nd International Symposium on Theoretical Aspects of Computer Science, {STACS} 2025, March 4-7, 2025, Jena, Germany}, volume 327 of {\em LIPIcs}, pages 28:1--28:24. Schloss Dagstuhl - Leibniz-Zentrum f{\"{u}}r Informatik, 2025.
\newblock \href {https://doi.org/10.4230/LIPICS.STACS.2025.28} {\path{doi:10.4230/LIPICS.STACS.2025.28}}.

\bibitem{DBLP:conf/focs/CurticapeanM14}
Radu Curticapean and D{\'{a}}niel Marx.
\newblock Complexity of counting subgraphs: Only the boundedness of the vertex-cover number counts.
\newblock In {\em 55th {IEEE} Annual Symposium on Foundations of Computer Science, {FOCS} 2014, Philadelphia, PA, USA, October 18-21, 2014}, pages 130--139. {IEEE} Computer Society, 2014.
\newblock \href {https://doi.org/10.1109/FOCS.2014.22} {\path{doi:10.1109/FOCS.2014.22}}.

\bibitem{CurticapeanN24}
Radu Curticapean and Daniel Neuen.
\newblock Counting small induced subgraphs: Hardness via fourier analysis.
\newblock {\em CoRR}, abs/2407.07051, 2024.
\newblock \href {https://arxiv.org/abs/2407.07051} {\path{arXiv:2407.07051}}.

\bibitem{CurticapeanN25}
Radu Curticapean and Daniel Neuen.
\newblock Counting small induced subgraphs: Hardness via {F}ourier analysis.
\newblock In Yossi Azar and Debmalya Panigrahi, editors, {\em Proceedings of the 2025 {ACM-SIAM} Symposium on Discrete Algorithms, {SODA} 2025, New Orleans, LA, USA, January 12-15, 2025}, pages 3677--3695. {SIAM}, 2025.
\newblock \href {https://doi.org/10.1137/1.9781611978322.122} {\path{doi:10.1137/1.9781611978322.122}}.

\bibitem{CyganFKLMPPS15}
Marek Cygan, Fedor~V. Fomin, Lukasz Kowalik, Daniel Lokshtanov, D{\'{a}}niel Marx, Marcin Pilipczuk, Michal Pilipczuk, and Saket Saurabh.
\newblock {\em Parameterized Algorithms}.
\newblock Springer, 2015.
\newblock \href {https://doi.org/10.1007/978-3-319-21275-3} {\path{doi:10.1007/978-3-319-21275-3}}.

\bibitem{Diestel17}
Reinhard Diestel.
\newblock {\em Graph Theory}.
\newblock Springer Berlin, 5 edition, 2017.
\newblock \href {https://doi.org/10.1007/978-3-662-53622-3} {\path{doi:10.1007/978-3-662-53622-3}}.

\bibitem{DorflerRSW22}
Julian D{\"{o}}rfler, Marc Roth, Johannes Schmitt, and Philip Wellnitz.
\newblock Counting induced subgraphs: An algebraic approach to {\#}{W}[1]-hardness.
\newblock {\em Algorithmica}, 84(2):379--404, 2022.
\newblock \href {https://doi.org/10.1007/S00453-021-00894-9} {\path{doi:10.1007/S00453-021-00894-9}}.

\bibitem{DoringMW24}
Simon D{\"{o}}ring, D{\'{a}}niel Marx, and Philip Wellnitz.
\newblock Counting small induced subgraphs with edge-monotone properties.
\newblock In Bojan Mohar, Igor Shinkar, and Ryan O'Donnell, editors, {\em Proceedings of the 56th Annual {ACM} Symposium on Theory of Computing, {STOC} 2024, Vancouver, BC, Canada, June 24-28, 2024}, pages 1517--1525. {ACM}, 2024.
\newblock \href {https://doi.org/10.1145/3618260.3649644} {\path{doi:10.1145/3618260.3649644}}.

\bibitem{DoringMW25}
Simon D{\"{o}}ring, D{\'{a}}niel Marx, and Philip Wellnitz.
\newblock From graph properties to graph parameters: Tight bounds for counting on small subgraphs.
\newblock In Yossi Azar and Debmalya Panigrahi, editors, {\em Proceedings of the 2025 {ACM-SIAM} Symposium on Discrete Algorithms, {SODA} 2025, New Orleans, LA, USA, January 12-15, 2025}, pages 3637--3676. {SIAM}, 2025.
\newblock \href {https://doi.org/10.1137/1.9781611978322.121} {\path{doi:10.1137/1.9781611978322.121}}.

\bibitem{FlumG06}
J{\"{o}}rg Flum and Martin Grohe.
\newblock {\em Parameterized Complexity Theory}.
\newblock Texts in Theoretical Computer Science. An {EATCS} Series. Springer, 2006.
\newblock \href {https://doi.org/10.1007/3-540-29953-X} {\path{doi:10.1007/3-540-29953-X}}.

\bibitem{FockeR24}
Jacob Focke and Marc Roth.
\newblock Counting small induced subgraphs with hereditary properties.
\newblock {\em {SIAM} J. Comput.}, 53(2):189--220, 2024.
\newblock \href {https://doi.org/10.1137/22M1512211} {\path{doi:10.1137/22M1512211}}.

\bibitem{GoldbergR24}
Leslie~Ann Goldberg and Marc Roth.
\newblock Parameterised and fine-grained subgraph counting, modulo 2.
\newblock {\em Algorithmica}, 86(4):944--1005, 2024.
\newblock \href {https://doi.org/10.1007/S00453-023-01178-0} {\path{doi:10.1007/S00453-023-01178-0}}.

\bibitem{DBLP:conf/stoc/GroheSS01}
Martin Grohe, Thomas Schwentick, and Luc Segoufin.
\newblock When is the evaluation of conjunctive queries tractable?
\newblock In Jeffrey~Scott Vitter, Paul~G. Spirakis, and Mihalis Yannakakis, editors, {\em Proceedings on 33rd Annual {ACM} Symposium on Theory of Computing, July 6-8, 2001, Heraklion, Crete, Greece}, pages 657--666. {ACM}, 2001.
\newblock \href {https://doi.org/10.1145/380752.380867} {\path{doi:10.1145/380752.380867}}.

\bibitem{IP01}
Russell Impagliazzo and Ramamohan Paturi.
\newblock On the complexity of k-{SAT}.
\newblock {\em J. Comput. Syst. Sci.}, 62(2):367--375, 2001.
\newblock \href {https://doi.org/10.1006/JCSS.2000.1727} {\path{doi:10.1006/JCSS.2000.1727}}.

\bibitem{IPZ01}
Russell Impagliazzo, Ramamohan Paturi, and Francis Zane.
\newblock Which problems have strongly exponential complexity?
\newblock {\em J. Comput. Syst. Sci.}, 63(4):512--530, 2001.
\newblock \href {https://doi.org/10.1006/JCSS.2001.1774} {\path{doi:10.1006/JCSS.2001.1774}}.

\bibitem{DBLP:journals/siamcomp/ItaiR78}
Alon Itai and Michael Rodeh.
\newblock Finding a minimum circuit in a graph.
\newblock {\em {SIAM} J. Comput.}, 7(4):413--423, 1978.
\newblock \href {https://doi.org/10.1137/0207033} {\path{doi:10.1137/0207033}}.

\bibitem{JerrumM15}
Mark Jerrum and Kitty Meeks.
\newblock The parameterised complexity of counting connected subgraphs and graph motifs.
\newblock {\em J. Comput. Syst. Sci.}, 81(4):702--716, 2015.
\newblock \href {https://doi.org/10.1016/J.JCSS.2014.11.015} {\path{doi:10.1016/J.JCSS.2014.11.015}}.

\bibitem{JerrumM15b}
Mark Jerrum and Kitty Meeks.
\newblock Some hard families of parameterized counting problems.
\newblock {\em {ACM} Trans. Comput. Theory}, 7(3):11:1--11:18, 2015.
\newblock \href {https://doi.org/10.1145/2786017} {\path{doi:10.1145/2786017}}.

\bibitem{DBLP:conf/icml/JinBCL24}
Emily Jin, Michael~M. Bronstein, {\.I}smail~{\.I}lkan Ceylan, and Matthias Lanzinger.
\newblock Homomorphism counts for graph neural networks: All about that basis.
\newblock In {\em Forty-first International Conference on Machine Learning, {ICML} 2024, Vienna, Austria, July 21-27, 2024}. OpenReview.net, 2024.
\newblock URL: \url{https://openreview.net/forum?id=zRrzSLwNHQ}.

\bibitem{Kozlov08}
Dmitry~N. Kozlov.
\newblock {\em Combinatorial Algebraic Topology}, volume~21 of {\em Algorithms and computation in mathematics}.
\newblock Springer, 2008.
\newblock \href {https://doi.org/10.1007/978-3-540-71962-5} {\path{doi:10.1007/978-3-540-71962-5}}.

\bibitem{LenstraBE74}
Hendrik Lenstra, Marc~R. Best, and Peter van Emde~Boas.
\newblock A sharpened version of the {A}anderaa-{R}osenberg conjecture.
\newblock {\em Report 30/74, Mathematisch Centrum Amsterdam (1974)}, pages 1--20, 1974.
\newblock URL: \url{https://hdl.handle.net/1887/3792}.

\bibitem{Milo02}
Ron Milo, Shai~S. Shen-Orr, Shalev Itzkovitz, Nadav Kashtan, Dmitri~B. Chklovskii, and Uri Alon.
\newblock Network motifs: Simple building blocks of complex networks.
\newblock {\em Science}, 298(5594):824--827, 2002.
\newblock \href {https://doi.org/10.1126/science.298.5594.824} {\path{doi:10.1126/science.298.5594.824}}.

\bibitem{Neumann74}
Peter~M. Neumann.
\newblock Transitive permutation groups of prime degree.
\newblock In {\em Proceedings of the Second International Conference on the Theory of Groups}, volume Vol. 372 of {\em Lecture Notes in Math.}, pages 520--535. Springer Berlin Heidelberg, 1974.
\newblock \href {https://doi.org/10.1007/978-3-662-21571-5\_55} {\path{doi:10.1007/978-3-662-21571-5\_55}}.

\bibitem{NesetrilP85}
Jaroslav Ne\v{s}et\v{r}il and Svatopluk Poljak.
\newblock On the complexity of the subgraph problem.
\newblock {\em Comment. Math. Univ. Carolin.}, 26(2):415--419, 1985.
\newblock URL: \url{http://dml.cz/dmlcz/106381}.

\bibitem{DBLP:journals/sigact/Rosenberg73}
Arnold~L. Rosenberg.
\newblock On the time required to recognize properties of graphs: a problem.
\newblock {\em {SIGACT} News}, 5(4):15--16, 1973.
\newblock \href {https://doi.org/10.1145/1008299.1008302} {\path{doi:10.1145/1008299.1008302}}.

\bibitem{Rosenfeld64}
Moshe Rosenfeld.
\newblock Independent sets in regular graphs.
\newblock {\em Israel J. Math.}, 2:262--272, 1964.
\newblock \href {https://doi.org/10.1007/BF02759743} {\path{doi:10.1007/BF02759743}}.

\bibitem{RothS20}
Marc Roth and Johannes Schmitt.
\newblock Counting induced subgraphs: {A} topological approach to {\#}{W}[1]-hardness.
\newblock {\em Algorithmica}, 82(8):2267--2291, 2020.
\newblock \href {https://doi.org/10.1007/S00453-020-00676-9} {\path{doi:10.1007/S00453-020-00676-9}}.

\bibitem{DBLP:conf/icalp/Roth0W21}
Marc Roth, Johannes Schmitt, and Philip Wellnitz.
\newblock Detecting and counting small subgraphs, and evaluating a parameterized tutte polynomial: Lower bounds via toroidal grids and cayley graph expanders.
\newblock In Nikhil Bansal, Emanuela Merelli, and James Worrell, editors, {\em 48th International Colloquium on Automata, Languages, and Programming, {ICALP} 2021, July 12-16, 2021, Glasgow, Scotland (Virtual Conference)}, volume 198 of {\em LIPIcs}, pages 108:1--108:16. Schloss Dagstuhl - Leibniz-Zentrum f{\"{u}}r Informatik, 2021.
\newblock \href {https://doi.org/10.4230/LIPICS.ICALP.2021.108} {\path{doi:10.4230/LIPICS.ICALP.2021.108}}.

\bibitem{RothSW24}
Marc Roth, Johannes Schmitt, and Philip Wellnitz.
\newblock Counting small induced subgraphs satisfying monotone properties.
\newblock {\em {SIAM} J. Comput.}, 53(6):FOCS20--139--–FOCS20--174, 2024.
\newblock \href {https://doi.org/10.1137/20M1365624} {\path{doi:10.1137/20M1365624}}.

\bibitem{SchillerJHS15}
Benjamin Schiller, Sven Jager, Kay Hamacher, and Thorsten Strufe.
\newblock {StreaM} - {A} stream-based algorithm for counting motifs in dynamic graphs.
\newblock In Adrian{-}Horia Dediu, Francisco~Hern{\'{a}}ndez Quiroz, Carlos Mart{\'{\i}}n{-}Vide, and David~A. Rosenblueth, editors, {\em Algorithms for Computational Biology - Second International Conference, AlCoB 2015, Mexico City, Mexico, August 4-5, 2015, Proceedings}, volume 9199 of {\em Lecture Notes in Computer Science}, pages 53--67. Springer, 2015.
\newblock \href {https://doi.org/10.1007/978-3-319-21233-3\_5} {\path{doi:10.1007/978-3-319-21233-3\_5}}.

\bibitem{DBLP:journals/tcsb/SchreiberS05}
Falk Schreiber and Henning Schw{\"{o}}bbermeyer.
\newblock Frequency concepts and pattern detection for the analysis of motifs in networks.
\newblock {\em Trans. Comp. Sys. Biology}, 3:89--104, 2005.
\newblock \href {https://doi.org/10.1007/11599128\_7} {\path{doi:10.1007/11599128\_7}}.

\end{thebibliography}

\end{document}